\newtheorem{theorem}{Theorem}
\newtheorem{prop}{Proposition}
\newtheorem{lemma}{Lemma}
\newtheorem{cor}{Corollary}
\newtheorem*{remark}{Remark}
\newcommand{\beq}{\begin{equation}}
\newcommand{\eeq}{\end{equation}}
\newcommand{\nn}{\nonumber}
\newcommand{\F}{\mathcal{F}}
 \newcommand{\bP}{{\mathbb P}}
\newcommand{\RR}{{\mathbb R}}
\newcommand{\CC}{{\mathbb C}} \newcommand{\cH}{{\mathcal H}} \newcommand{\cM}{{\mathcal M}}
\newcommand{\cP}{{\mathcal P}}
\newcommand{\ZZ}{{\mathbb Z}} 
\newcommand{\tr}{{\rm tr}}
\newcommand{\e}{\epsilon}
\newcommand{\p}{\partial}
\newcommand{\half}{\frac{1}{2}}
\newcommand{\vac}{|0\rangle}
\newcommand{\corr}[1]{\langle {#1} \rangle}
\DeclareMathOperator{\Img}{Im} \DeclareMathOperator{\vol}{vol}
\newcommand{\ben}{\begin{eqnarray*}}
\newcommand{\een}{\end{eqnarray*}}
\newcommand{\bea}{\begin{eqnarray}}
\newcommand{\eea}{\end{eqnarray}}
\begin{document}

\title[Grothendieck's Dessins D'Enfants in a Web of Dualities]
{Grothendieck's Dessins d'Enfants in a Web of Dualities. III.}
\author{Di Yang, Jian Zhou}
\date{}
\maketitle

\begin{abstract}
We identify the dessin partition function with the partition function of the Laguerre unitary 
ensemble (LUE). Combined with the result due to Cunden {\em et al} on the relationship between 
the LUE correlators and strictly monotone Hurwitz numbers introduced by Goulden {\em et al}, 
we then establish connection of dessin counting to strictly monotone Hurwitz numbers. 
We also introduce a correction factor for the dessin/LUE partition function, which plays an important role in
showing that the corrected dessin/LUE partition function is a tau-function
of the Toda lattice hierarchy. 
As an application, we use the
approach of Dubrovin and Zhang for the computation of the dessin correlators. 
In physicists' terminology,
we establish dualities among dessin counting, generalized Penner model, and $\bP^1$-topological sigma model.
\end{abstract}

\setcounter{tocdepth}{1}
\tableofcontents

\section{Introduction}\label{section1}

In this sequel to \cite{Zhou2,Zhou3},
we will continue to study the duality of enumerations of Grothendieck's dessins
d'enfants with some other theories.
The proposal in \cite{Zhou2} is to view the partition functions of
different theories as tau-functions of the KP hierarchy,
and use the fact that the Sato Grassmannian, the space of tau-functions,
is an infinite-dimensional homogeneous space under the action of
$\widehat{GL}(\infty)$.
In this paper,
we will study the duality between dessin partition function with partition
functions of other theories from the viewpoint of Toda lattice hierarchy.

In this new approach to duality,
we can apply the theory of normal forms of integrable hierarchies and the extended Toda hierarchy~\cite{CDZ}
as developed by Dubrovin and Zhang~\cite{DZ-norm, DZ} to establish connections
with other theories.
In this theory,
one starts with a semisimple Frobenius manifold and constructs 
a hierarchy in genus zero associated with it, 
called the {\em principal hierarchy}. 
This will determine the free energy in genus zero from some initial values.
The higher genera parts of the free energy are determined recursively by Virasoro constraints
in a form called {\em loop equation} in \cite{DZ-norm}.
Note that in this theory the free energy in each genus is expressed in terms of 
the jet variables on the loop space of the Frobenius manifold; 
these expressions are completely determined by the Frobenius manifold,
and {\it do not} depend on the initial values in genus zero.
We interpret this as giving a universality class of criticality in the renormalization 
theory of quantum field theories. 

The integrable hierarchy associated with the Frobenius
manifold underlying the GW theory of~$\bP^1$
(also known as the $\bP^1$-topological sigma model) is
the extended Toda hierarchy \cite{CDZ,DZ-norm,DZ,Zhang}.
It is just the Toda lattice hierarchy with an extra series of commuting flows.
So our results establish a connection to all theories whose partition functions are 
tau-functions of the (extended) Toda lattice hierarchy.

Before we give more detailed statements of our results, 
let us recall some definitions and  previous results from \cite{Zhou2}. 
Let $(C,f)$ be a Belyi pair of genus~$g$ and degree~$d$, and $\Gamma$ the corresponding dessin. 
(For applications of Belyi pairs in physics see e.g.~\cite{BFHHRXY}.)
Put $k=|f^{-1}(0)|$, $l=|f^{-1}(1)|$, and $m=|f^{-1}(\infty)|$. By the Riemann-Hurwitz formula we know that
\beq\label{riemannhurwitz}
2g-2 = d-(k+l+m).
\eeq
We assume that the poles of~$f$ are labeled and denote the set of their orders by $\mu=(\mu_1,\dots,\mu_m)$ so
that $d=\sum_{i\ge1} \mu_i$. The tripe $(k,l,\mu)$ will be called the {\it type} of the dessin~$\Gamma$, and the
set of all dessins of type $(k,l,\mu)$ will be denoted by $\mathcal{D}_{k,l;\mu}$.
Let $N_{k,l}(\mu)=N_{k,l}(\mu_1,\dots,\mu_m)$ denote the weighted count of the labeled dessins d'enfants, i.e.,
\beq\label{deinklmum}
N_{k,l}(\mu_1,\dots,\mu_m) := \sum_{\Gamma\in \mathcal{D}_{k,l,\mu}}  \frac1{|{\rm Aut}_b \Gamma|} ,
\eeq
where ${\rm Aut}_b \Gamma$ denotes the group of automorphisms of~$\Gamma$ that preserve the boundary component wise.
For $m\geq1$, and $\mu_1,\dots,\mu_m\geq1$, define the
{\it connected 
$m$-point dessin correlator} as follows:
\begin{align}
\langle \tau_{\mu_1}\cdots\tau_{\mu_m} \rangle (u,v,\e)
& = \sum_{g\geq0, \, k,l\ge1 \atop 2g+k+l=|\mu|-m+2}  N_{k,l}(\mu)  \, u^k  v^l  \e^{2g-2}, \label{defdessinscorr}
\end{align}
where $u,v,\e$ are indeterminates, $|\mu|:=\mu_1+\dots+\mu_m$.

Let
$\F_{\rm dessins}=\F_{\rm dessins}(u,v,{\bf p};\e)$
be following the generating function of connected dessin correlators:
\begin{align}
\F_{\rm dessins}(u,v,{\bf p};\e)
& :=
\sum_{m\geq1}
\sum_{\mu_1,\dots,\mu_m\geq1} \langle \tau_{\mu_1} \cdots \tau_{\mu_m} \rangle(u,v,\e) \, \frac{p_{\mu_1} \cdots p_{\mu_m}}{m!}, \label{deffdessins}
\end{align}
called the {\it dessin free energy}.
Here ${\bf p}=(p_1,p_2,p_3,\cdots)$ is an infinite vector of indeterminates.
The exponential
\beq \label{deffzduvpe}
e^{\F_{\rm dessins}(u,v,{\bf p};\e)} =: Z_{\rm dessins}(u,v,{\bf p};\e) = Z_{\rm dessins}
\eeq
is called the {\it dessin partition function}.

Based on the fact that the dessin partition function $Z_{\rm dessins}$ is a particular tau-function of the KP hierarchy
satisfying some Virasoro constraints~\cite{Zograf} (cf.~\cite{Zhou2, Zhou3}),
the second-named author of the present paper obtains~\cite{Zhou2}
the explicit affine coordinates~\cite{EH,Zhou} for the point in Sato Grassmannian
that corresponds to~$Z_{\rm dessins}$.
It then follows from the Theorem~5.3 of~\cite{Zhou}
an explicit formula for the generating series of $m$-point connected dessin correlators, namely, we have

\noindent {\bf Theorem A} (\cite{Zhou,Zhou2}). For each $m\geq2$,
the generating series for $m$-point connected dessin correlators
has the following expression:
\beq\label{connectedAformula}
\sum_{\mu_1,\dots,\mu_m\ge1} \prod_{j=1}^{m} \mu_j 
\frac{\langle\tau_{\mu_1}\cdots\tau_{\mu_m}\rangle(n,w,1)}
{\lambda_1^{\mu_1+1}\cdots \lambda_m^{\mu_m+1}} = (-1)^{m-1} \sum_{\sigma\in S_m/C_m}
\prod_{j=1}^m \widehat{A}(\lambda_{\sigma(j)},\lambda_{\sigma(j+1)}) - \frac{\delta_{m,2}}{(\lambda-\mu)^2},
\eeq
where $\widehat{A}(\lambda,\mu)$ is defined by
\beq\label{defhata}
\widehat{A}(\lambda,\mu):=\frac1{\lambda-\mu} + \sum_{i,j\geq0} \frac{A_{i,j}}{\lambda^{i+1}\mu^{j+1}}
\eeq
with
\beq\label{defaij}
A_{i,j}  :=  \frac{(-1)^j nw}{(i+j+1)i!j!} \prod_{r=1}^i (n+r)(w+r)  \prod_{r=1}^j (n-r)(w-r).
\eeq
We know from formula~\eqref{riemannhurwitz} that the $\e$-dependence in the connected dessin correlators can be reconstructed from
$\langle\tau_{\mu_1}\cdots\tau_{\mu_m}\rangle(n,w,1)$; in other words, all the
dessin counting numbers $N_{k,l}(\mu_1,\dots,\mu_m)$ with $m\geq2$ can be computed by~\eqref{connectedAformula}.
Moreover, the algorithm developed from~\eqref{connectedAformula} is efficient for large genus.

To relate the dessin partition function $Z_{\rm dessins}$ to the theory of Toda lattices (cf.~\cite{AvM, CDZ, Deift, DY1, DZ, UT}), 
we introduce, inspired also from the Gaussian unitary ensemble (GUE) partition function (cf.~\cite{DY1}) and the 
Laguerre unitary ensemble (LUE) model~\cite{AvM, CDOC, GGR}, 
 a certain correction factor to the dessin partition function. 
Explicitly, define a power series $Z(x,{\bf p};a;\e)$ by 
\beq\label{correctionfactordesssins}
Z(x,{\bf p};a;\e) := 
\e^{-\frac1{12} +\frac{v^2}{\e^2}} (2 \pi)^{-\frac{v}{\e}} \frac{G(1+\frac{u}\e)G(1+\frac{v}\e)}{G(1+\frac{v-u}\e)} Z_{\rm dessins}(u,v,{\bf p};\e).
\eeq
Here, 
$G(z)$ denotes the Barnes $G$-function, and the factors $G(1+\frac{u}\e)$, $G(1+\frac{v}\e)$, $G(1+\frac{v-u}\e)$ appearing
in~\eqref{correctionfactordesssins} are understood as the exponential of the $\e\to 0$
formal asymptotics of their logarithms, where we recall that as $z\to\infty$,
\beq \label{eqn:G-asymp}
\log G(1+z) \sim \frac{z^2}{2} \biggl(\log z-\frac32\biggr) + \frac{z}2 \log (2\pi) - \frac1{12} \log z
+ \zeta'(-1) + \sum_{\ell\geq1} \frac{B_{2\ell+2}}{4\ell(\ell+1) z^{2\ell}}.
\eeq
With respect to the KP times $p_1,p_2,\dots$, the correction factor is regarded as a constant. 
It is also convenient to work directly with formal power series~\cite{AIKZ}.
Indeed, we define the {\it corrected dessin free energy}~$\F=\F(x, {\bf p}; a; \e)$ by
\begin{align}
&\F(x, {\bf p}; a; \e) := \F_{\rm dessins}\bigl(x,x+a, {\bf p};\e\bigr) \label{2022cf} \\
& + \frac1{\e^2}  \biggl( \frac{x^2}2 \log x + \frac{(x+a)^2}{2}  \log (x+a)
- \frac{a^2}2 \log a-\frac{3}{2} x(x+a) \biggr) \nn\\
& - \frac1{12} \Bigl( \log x + \log (x+a\bigr) - \log a\Bigr) + \zeta'(-1)\nn\\
& + \sum_{g\ge2}  \frac{B_{2g}}{4g(g-1)}  \frac{\e^{2g-2}}{x^{2g-2}}
+ \sum_{g\ge2}  \frac{B_{2g}}{4g(g-1)}  \frac{\e^{2g-2}}{(x+a)^{2g-2}} - \sum_{g\ge2}  \frac{B_{2g}}{4g(g-1)}  \frac{\e^{2g-2}}{a^{2g-2}}, \nn
\end{align}
and we have $\exp(\F(x,{\bf p};a;\e))=Z(x,{\bf p}; a; \e)$. We call $Z=Z(x,{\bf p}; a; \e)$ the {\it corrected dessin partition function}.

It is known~\cite{KZ} (cf.~\cite{Zhou1}) that the dessin partition function 
$Z_{\rm dessins}(u,v,{\bf p};\e)$ satisfies Virasoro constraints 
(see~\eqref{virazdessins} of \S~\ref{newsection21121}). Then by 
using~\eqref{correctionfactordesssins} we immediately obtain that  
 the corrected dessin partition function~$Z$ satisfies the following Virasoro constraints:
\beq\label{virazfull}
L_k (Z) = 0 , \quad k\geq 0,
\eeq
where $L_k$ are linear operators given by 
\beq\label{virasorolueZ}
L_k = 2k \Bigl(x+\frac{a}2\Bigr) \frac{\p }{\p p_k} + \sum_{j\geq1} (k+j) \tilde{p}_j  \frac{\p }{\p p_{k+j}}
+ \frac1{\e^2}\sum_{i,j\ge1\atop i+j=k}  i  j  \frac{\p^2}{\p p_i \p p_j} + \frac{x(x+a)}{\e^2} \delta_{k,0}
\eeq
with $\tilde p_j:=p_j-\delta_{j,1}$, 
satisfying $[L_{k_1},L_{k_2}]=(k_1-k_2)L_{k_1+k_2}$ ($\forall\, k_1,k_2\geq0$). 
We will show in \S~\ref{newsection21121} that 
$Z$ also satisfies the following {\it dilaton equation}:
\beq\label{dilatonforZ121}
 \sum_{j\geq1}  \tilde p_j  \frac{\p Z}{\p p_{j}} + \e  \frac{\p Z}{\p \e} + x  \frac{\p Z}{\p x} +
a \frac{\p Z}{\p a}  + \frac1{12}Z = 0.
\eeq

Based on Theorem~A and on the matrix-resolvent (MR) method to tau-functions for the Toda lattice hierarchy~\cite{DY1}, 
we prove in \S~\ref{newsection3} the following theorem. 
\begin{theorem}\label{thm2}
The corrected dessin partition function $Z(x,{\bf p};a;\e)$ is a  
 tau-function for the Toda lattice hierarchy. In particular, the functions $V(x,{\bf p};\e)$ and $W(x,{\bf p};\e)$, 
 defined by
\begin{align}
 & V(x,{\bf p};\e) = \e (\Lambda-1) \frac{\p \log Z(x,{\bf p};a;\e)}{\p p_1}, \\
 & W(x,{\bf p};\e)= \frac{Z(x+\e,{\bf p};a;\e)Z(x-\e,{\bf p};a;\e)}{Z(x,{\bf p};a;\e)^2},
\end{align}
satisfy the Toda lattice hierarchy (see~\eqref{TLH}) with $t_i = p_{i+1}/(i+1)$, $i\geq0$.
Moreover, the solution $(V(x,{\bf s};\e), W(x,{\bf s};\e))$ is uniquely specified by the following initial data:
\beq\label{initialvaluetodadessins}
V(x,{\bf 0};\e) = 2 x+a+\e, \quad W(x, {\bf 0};\e) = x(x+a).
\eeq
\end{theorem}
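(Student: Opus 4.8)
The strategy is to combine Theorem~A with the matrix-resolvent (MR) method for tau-functions of the Toda lattice hierarchy from~\cite{DY1}. Recall that the MR method attaches to a solution $(V,W)$ of the Toda lattice hierarchy its \emph{basic matrix resolvent} $R(\lambda)=R(\lambda;x)$: the unique $2\times2$ matrix-valued formal series in $\lambda^{-1}$ obeying a difference relation of the form $R(\lambda;x+\e)\,\mathcal L(\lambda;x)=\mathcal L(\lambda;x)\,R(\lambda;x)$ with $\mathcal L(\lambda)$ the $2\times2$ Lax matrix of the hierarchy (whose entries are built from $V$ and $W$), normalized by $\tr R(\lambda)=1$, $R(\lambda)^2=R(\lambda)$, and a prescribed limit as $\lambda\to\infty$. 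The theorem of~\cite{DY1} then states that the Toda tau-function $\widetilde Z$ with prescribed initial data is determined, up to a normalizing constant, by $R$, and that its connected $m$-point correlators are given, for $m\ge2$, by the sum over cyclic orderings $\sigma\in S_m/C_m$ of $\tr\bigl(R(\lambda_{\sigma(1)})\cdots R(\lambda_{\sigma(m)})\bigr)$ divided by $\prod_j(\lambda_{\sigma(j)}-\lambda_{\sigma(j+1)})$, plus the universal $\delta_{m,2}/(\lambda_1-\lambda_2)^2$ term. The right-hand side of~\eqref{connectedAformula} is visibly of exactly this shape, with $\widehat A(\lambda,\mu)$ playing the role of the normalized kernel $\tr\bigl(R(\lambda)R(\mu)\bigr)/(\lambda-\mu)$ (up to the precise sign conventions of~\cite{DY1}). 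Since the correction factor in~\eqref{correctionfactordesssins}--\eqref{2022cf} is independent of ${\bf p}$, the $m\ge1$ correlators of $Z$ coincide with those of $Z_{\rm dessins}\bigl(x,x+a,{\bf p};\e\bigr)$, which for $m\ge2$ are supplied by Theorem~A once the $\e$-weights are restored by homogeneity (so that $n=x/\e$, $w=(x+a)/\e$; cf.\ the discussion following Theorem~A). Thus everything reduces to recognizing $\widehat A$ as a genuine resolvent kernel for the Toda solution singled out by~\eqref{initialvaluetodadessins}.

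First I would pass to the rank-one/idempotent form $R(\lambda)=r(\lambda)\,\ell(\lambda)$ (a column vector times a row vector) with $\ell(\lambda)\,r(\lambda)=1$, so that $\tr\bigl(R(\lambda_1)\cdots R(\lambda_m)\bigr)=\prod_j\ell(\lambda_j)\,r(\lambda_{j+1})$ and the desired identification becomes $\ell(\lambda)\,r(\mu)=(\lambda-\mu)\,\widehat A(\lambda,\mu)=1+(\lambda-\mu)\sum_{i,j\ge0}A_{i,j}\,\lambda^{-i-1}\mu^{-j-1}$, consistently with $\ell(\lambda)\,r(\lambda)=1$. The explicit product formula~\eqref{defaij} for $A_{i,j}$ has the separated form $[\,\text{function of }i\,]\cdot[\,\text{function of }j\,]\cdot\frac{(-1)^j}{i+j+1}$, the factor $1/(i+j+1)$ being the familiar ``$\int_0^1$'' that appears when two components of a resolvent are contracted; this lets one solve explicitly for $r(\lambda)$ and $\ell(\lambda)$ as $\lambda^{-1}$-series with coefficients polynomial in $x,a,\e$, and so write down $R(\lambda)$ for the data~\eqref{initialvaluetodadessins}. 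Then I would \emph{verify} that this $R(\lambda)$ is an admissible basic matrix resolvent: check the $\lambda\to\infty$ normalization and the idempotency $R(\lambda)^2=R(\lambda)$, and --- the crux --- check the resolvent relation $R(\lambda;x+\e)\,\mathcal L(\lambda;x)=\mathcal L(\lambda;x)\,R(\lambda;x)$ with $\mathcal L(\lambda)$ the Toda Lax matrix evaluated at $V=2x+a+\e$, $W=x(x+a)$. Concretely this last point amounts to explicit recursions for the coefficients $A_{i,j}$, which can be checked directly from~\eqref{defaij}. Once this is done, the MR trace formula with this $R$ reproduces~\eqref{connectedAformula} term by term, so all $m\ge2$ correlators of $Z$ agree with those of the Toda tau-function $\widetilde Z$ with initial data~\eqref{initialvaluetodadessins}.

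It then remains to match the $m\le1$ data and to verify~\eqref{initialvaluetodadessins}. For the one-point correlators $\p\log Z/\p p_i$ I would use the Virasoro constraints~\eqref{virazfull} (together with the dilaton equation~\eqref{dilatonforZ121}) to express them recursively in terms of the already-matched two-point correlators and of $\langle\tau_1\rangle(x,x+a,\e)=x(x+a)/\e^2$ (equivalently $N_{1,1}(1)=1$); the MR one-point formula of~\cite{DY1} satisfies the same recursion with the same seed, hence the two agree. For the ${\bf p}$-independent term $\log Z|_{{\bf p}={\bf 0}}=\F(x,{\bf 0};a;\e)$ I would insert the explicit expression~\eqref{2022cf} and use the Barnes $G$-function functional equation $G(z+1)=\Gamma(z)G(z)$: this makes the second difference in $x$ of $\log G(1+\tfrac x\e)+\log G(1+\tfrac{x+a}\e)-\log G(1+\tfrac a\e)$ collapse, after the polynomial terms of~\eqref{2022cf} are accounted for, to $\log\bigl(x(x+a)\bigr)$, which gives directly $W(x,{\bf 0};\e)=Z(x+\e,{\bf 0};a;\e)Z(x-\e,{\bf 0};a;\e)/Z(x,{\bf 0};a;\e)^2=x(x+a)$ and, similarly, $V(x,{\bf 0};\e)=\e(\Lambda-1)\,\p_{p_1}\log Z|_{{\bf p}={\bf 0}}=2x+a+\e$. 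This is exactly the computation the correction factor was designed to make work; it simultaneously fixes the normalizing constant of $\widetilde Z$, so that $Z=\widetilde Z$ and $Z$ is a Toda tau-function with the stated $V,W$. Uniqueness of the solution with initial data~\eqref{initialvaluetodadessins} is the standard fact that the Toda lattice flows are evolutionary and recover everything recursively from the initial data.

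The main obstacle is the middle step: proving that the kernel $\widehat A$ of Theorem~A --- equivalently the explicit coefficients $A_{i,j}$ of~\eqref{defaij} --- is the two-point matrix-resolvent kernel of the Toda solution~\eqref{initialvaluetodadessins}. This is an identity between two explicitly presented generating functions, and the cleanest route is to show both satisfy the same linear recursions/difference relations with matching initial conditions, i.e.\ to verify the recursion for the $A_{i,j}$ coming from $R(\lambda;x+\e)\,\mathcal L(\lambda;x)=\mathcal L(\lambda;x)\,R(\lambda;x)$. Some care is needed to align the spectral-parameter and $\e$-shift conventions of~\cite{Zhou} and~\cite{DY1}, in particular the ``$+\e$'' appearing in $V(x,{\bf 0};\e)=2x+a+\e$, which is the characteristic half-step shift of the Toda lattice.
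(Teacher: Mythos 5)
Your proposal is correct in outline and its overall architecture coincides with the paper's: reduce the theorem to matching (i) the $m\ge 2$ correlators via Theorem~A against the matrix-resolvent trace formula \eqref{multiderivativesresolvent} of \cite{DY1}, (ii) the one-point data via the $k=0$ Virasoro constraint \eqref{virazfull}, which converts $j\,\p_{p_j}\F|_{{\bf p}={\bf 0}}$ into the already-matched two-point correlator $\p^2\F/\p p_1\p p_j|_{{\bf p}={\bf 0}}$, and (iii) the ${\bf p}$-independent term by checking \eqref{deftau3} at ${\bf p}={\bf 0}$ using the functional equation of the Barnes $G$-function --- all three of these steps appear, in essentially the form you describe, in \S 3. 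The one place where you diverge is the step you correctly single out as the crux: identifying $\widehat A$ of \eqref{defaij} with resolvent data of the Toda solution \eqref{initialvaluetodadessins}. You propose to reconstruct the rank-one idempotent $R(\lambda)=r(\lambda)\ell(\lambda)$ from the coefficients $A_{i,j}$ and then verify the difference relation $\Lambda(R)\,U=U R$ by recursions on the $A_{i,j}$. The paper instead goes through the wave-function formalism of \cite{Y}: it exhibits the explicit pair $\psi_A,\psi_B$ in \eqref{psiaexp}--\eqref{psibexp} as hypergeometric-type series $h(\lambda,n,w)$ of \eqref{defh}, verifies directly that they are eigenfunctions of $L_{\rm ini}=\Lambda+(2n+\alpha+1)+n(n+\alpha)\Lambda^{-1}$ with the prescribed quasi-Wronskian \eqref{pairwave2}, and then computes the Christoffel--Darboux-type kernel $D(\lambda,\mu)$ of \eqref{dpsiab}, which reproduces $\widehat A$ on the nose; the resolvent $R(\lambda)$ is then recovered for free from the product formulas \eqref{eq87}--\eqref{eq88}. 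The two routes are logically equivalent (your $\ell$ and $r$ are, up to normalization, $(\psi_A,\Lambda^{-1}\psi_A)$ and $(\psi_B,\Lambda^{-1}\psi_B)$), but the paper's is lighter in practice: checking a three-term contiguity relation for an explicit series is easier than deriving and verifying the coupled recursion on $A_{i,j}$ forced by the matrix relation, and it sidesteps the point you gloss over, namely that $A_{i,j}$ is \emph{not} itself of separated rank-one form because of the $1/(i+j+1)$ factor --- the separation only emerges after the kernel is recognized as a two-term Christoffel--Darboux expression, i.e.\ after one has essentially guessed the wave functions. Your sign bookkeeping in $\ell(\lambda)r(\mu)=(\lambda-\mu)\widehat A(\lambda,\mu)$ also needs adjustment against \eqref{multiderivativesresolvent} and \eqref{connectedAformula} (the consistent normalization is $\ell(\lambda)r(\lambda)=-1$ up to a redefinition invisible in cyclic products), but this is cosmetic. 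None of this constitutes a gap; it is a harder implementation of the same proof.
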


Once a connection with Toda lattice hierarchy is established for the dessin counting,
one can use the techniques developed in the theory of the former to study the latter.
In this paper we will focus on the application,
as alluded to in the beginning of this Introduction,
of the theory of integrable hierarchies associated with semisimple Frobenius manifolds
as developed by Dubrovin and Zhang \cite{DZ-norm}.
The Lax operator for the Toda lattice hierarchy is a linear difference operator of the form:
\beq
L= \Lambda + V(x,{\bf p};\e) + W(x,{\bf p};\e) \Lambda^{-1},
\eeq
where $\Lambda:f(x)\mapsto f(x+\e)$ is the shift operator.
According to Theorem~\ref{thm2}, 
we can read from~\eqref{initialvaluetodadessins}
 the initial Lax operator~$L_{\rm ini}= \Lambda+ 2 x+a+\e + x(x+a) \Lambda^{-1}$ for the dessin solution.
The Toda lattice hierarchy can be thought of as a system of evolution equations on the $(V,W)$-plane.
It turns out that there is a structure of Frobenius manifold
on the $(v=V|_{\e=0}, u = \log(W)|_{\e=0})$-plane.
In fact the dispersionless extended Toda lattice hierarchy coincides with the principal hierarchy of
the semisimple Frobenius manifold with the Frobenius potential
\beq\label{todafrob}
F({\bf v})=\frac12 v^2 u + e^{u}
\eeq
and the Euler vector field
\beq\label{todaeuler}
E=v\p_v + 2 \p_u.
\eeq
This is the Frobenius manifold associated with the GW theory of $\mathbb{P}^1$.
As shown in \cite{DZ},
the integrable hierarchy associated with this Frobenius manifold is the extended Toda hierarchy.
We now have connection between
dessin counting and Gromow-Witten theory of~$\mathbb{P}^1$,
and in particular we can use the Dubrovin--Zhang method \cite{DY1,DZ-norm,DZ} for computing
 the dessin free energy; 
this gives another algorithm for computing the dessin correlators which is particularly efficient when $m$ is large.
The steps of this method are as follows.
First, the initial values of Toda lattice hierarchy
give a point in the Frobenius manifold.
By computing the Riemann invariants,
one sees whether one gets a {\em monotone solution} of the principal hierarchy associated to the above Frobenius manifold.
If the solution has monotonicity then in genus zero it can be obtained by the hodograph method.
In~\cite{DZ-norm,DZ}
the {\it quasi-triviality}~\cite{DZ} for the extended Toda hierarchy gives a construction which
transforms every flow of the principal hierarchy to the corresponding flow of the extended Toda hierarchy.
In particular,
they obtain some universal formulas in genus one \eqref{f1gw} and the loop equation \eqref{loop}
in genus bigger than one that compute the free energy in genus $\geq 1$ from the results in genus zero.

Note that formulas \eqref{f1gw} and \eqref{loop} are {\em universal} in the sense that they
do not depend on the initial values of $u$ and $v$ on the Frobenius manifold.
So we would like to interpret it as giving a universality class in some space of quantum field theories
as the critical point under some renormalization flow.
In the Dubrovin--Zhang method,
the jet variables on the Frobenius manifold are used in \eqref{f1gw} and \eqref{loop}.
From the point of view of~\cite{ZZ},
it should be possible to transform the jet variables to some renormalization variables.

So now we can combine the results and perspectives of this paper with our earlier work \cite{DLYZ, DY1,Zhou2}.
First of all,
by a result in~\cite{Zhou2},
the normalized modified GUE partition function with even couplings, up to a constant factor (can depend on $x$), 
can be identified with the dessin partition function,
and by the Hodge-GUE correspondence \cite{DLYZ, DY2},
it can also be identified with some series obtained from triple Hodge integrals.
Therefore,
by multiplying the correction factor \eqref{correctionfactordesssins} one can see that
these two kinds of partition functions also give tau-functions
of the  Toda  hierarchy and hence lie in the above universality class.
Secondly, it is known in \cite{DY1} that the GUE partition function with a suitable correction factor given by the Barnes function
is a tau-function of the Toda lattice hierarchy and so it lies in the same universality class.
Last, but not the least, the GW theory of~$\bP^1$~\cite{DZ} lies in this universality class.
The difference among these theories is only their initial values which give different
parametrized curves on a two-dimensional Frobenius manifold.

One recognizes from the initial data~\eqref{initialvaluetodadessins} 
the LUE solution to the Toda lattice hierarchy (cf.~\cite{AvM, GGR}). Following~\cite{AvM, CDOC, GGR}
define the {\it normalized LUE partition function} by
\beq\label{defzlue1intro}
Z_{\rm LUE1}(x,{\bf p};a;\e) = 
\frac{G(\alpha+1)  \e^{-n^2-\alpha n}}{\pi^{\frac{n(n-1)}{2}}  G(n+\alpha+1)}
\int_{\mathcal{H}^+_n}  (\det{M})^\alpha  e^{- \frac1\e {\rm tr} \, V(M; {\bf p}) }  dM,
\eeq
where $\mathcal{H}_n^+$ denote the space of positive hermitian matrices of size~$n$, 
$\alpha=a \epsilon$, $x=n \epsilon$,
\beq\label{vmdef}
V(M;{\bf p})= M-\sum_{j\ge1} \frac{p_j}{j} M^j,
\eeq
and
\beq\label{invmeasure}
dM = \prod_{1\leq i\leq n}  d M_{ii}  \prod_{1\leq i<j\leq n}  d {\rm Re} M_{ij} d{\rm Im} M_{ij}.
\eeq
By using Theorem~\ref{thm2} and a result in~\cite{GGR} we will prove in \S~\ref{newsection3} the following corollary.
\begin{cor}\label{corzdzo} We have
\beq\label{identityzdzo}
Z_{\rm dessins}(x,x+a,{\bf p};\e) = Z_{\rm LUE1}(x,{\bf p};a;\e).
\eeq
\end{cor}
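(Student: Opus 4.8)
The plan is to recognize both sides of \eqref{identityzdzo} as the same tau-function of the Toda lattice hierarchy, after multiplying by the correction factor that already enters the corrected dessin partition function. Let $C=C(x;a;\e)$ denote the exponential of the ${\bf p}$-independent part of the right-hand side of \eqref{2022cf}, so that $Z(x,{\bf p};a;\e)=C\cdot Z_{\rm dessins}(x,x+a,{\bf p};\e)$ and, since $Z_{\rm dessins}(x,x+a,{\bf 0};\e)=1$, also $Z(x,{\bf 0};a;\e)=C$. As $C$ does not involve the KP times, \eqref{identityzdzo} is equivalent to $Z=C\cdot Z_{\rm LUE1}$, and this is what I would establish.

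First I would treat the LUE side. By \eqref{defzlue1intro}, $Z_{\rm LUE1}$ is an explicit $n$-dependent prefactor times the matrix integral over $\mathcal{H}_n^+$ with Laguerre weight $(\det M)^\alpha e^{-\tr M/\e}$ deformed in Miwa form through \eqref{vmdef}; passing to eigenvalues, the integral equals the Weyl constant times $n!$ times $\prod_{j=0}^{n-1}h_j$, the product of the squared norms of the monic orthogonal polynomials for the deformed weight, which is a tau-function of the Toda lattice hierarchy with $t_i=p_{i+1}/(i+1)$ (this is the input from \cite{GGR}; cf.\ also \cite{AvM,CDOC}). The prefactor in \eqref{defzlue1intro} is designed, via the Laguerre--Selberg evaluation of the integral at ${\bf p}=0$, so that $Z_{\rm LUE1}(x,{\bf 0};a;\e)=1$; using the asymptotics \eqref{eqn:G-asymp} one then checks that $C$ times the prefactor times the Weyl constant times $n!$ is a factor of the form $A(a,\e)^{\,n}B(a,\e)$, i.e.\ exponential-linear in $x$. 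Since multiplying a Toda tau-function by such a factor again yields a Toda tau-function with the same Lax data, $C\cdot Z_{\rm LUE1}$ is a Toda tau-function carrying the same $(V,W)$ as $\prod_{j=0}^{n-1}h_j$; the latter is read off from the three-term recurrence of the monic generalized Laguerre polynomials, whose Jacobi diagonal and off-diagonal-squared entries at ${\bf p}=0$ become, under the scalings of \eqref{defzlue1intro}, precisely $2x+a+\e$ and $x(x+a)$. Hence $C\cdot Z_{\rm LUE1}$ is a Toda tau-function with initial data \eqref{initialvaluetodadessins} and with $(C\cdot Z_{\rm LUE1})|_{{\bf p}=0}=C$.

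Next, Theorem~\ref{thm2} says that $Z$ is a tau-function of the Toda lattice hierarchy with the same initial data \eqref{initialvaluetodadessins} and that the Toda solution $(V,W)$ is uniquely determined by this data. Therefore $Z$ and $C\cdot Z_{\rm LUE1}$ define one and the same $(V,W)$, so they are two tau-functions for a fixed Lax operator and hence differ by $\exp$ of a function that is affine in the KP times, with $x$-dependence constrained by the relations $\e(\Lambda-1)\p_{p_1}\log\tau=V$ and $(\Lambda-1)(\Lambda^{-1}-1)\log\tau=\log W$ (equivalently, by $x$-linear coefficients). Comparing the values at ${\bf p}=0$ --- both equal $C$ --- removes the ${\bf p}$-independent part, and the remaining linear-in-${\bf p}$ ambiguity is removed by matching the first-order ${\bf p}$-jets; equivalently, by invoking the common Virasoro constraints \eqref{virazfull} and the dilaton equation \eqref{dilatonforZ121}, which $C\cdot Z_{\rm LUE1}$ also satisfies because $Z_{\rm LUE1}$ obeys the standard loop/Virasoro equations of the Laguerre matrix model and the $C$-dressing conjugates them into \eqref{virasorolueZ} exactly as in the dessin case. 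Consequently $Z=C\cdot Z_{\rm LUE1}$, and dividing by $C$ gives \eqref{identityzdzo}.

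I expect the main obstacle to be the interface with \cite{GGR}: reconciling the normalization \eqref{defzlue1intro} --- its Barnes $G$-functions, powers of $\e$ and $2\pi$, and the $\pi^{n(n-1)/2}$ --- with the precise form in which \cite{GGR} establishes the Toda tau-function property, checking via \eqref{eqn:G-asymp} that the combined prefactor is genuinely exponential-linear in $x$, and verifying that the Laguerre recurrence indeed produces \eqref{initialvaluetodadessins} under the intended identification of $x,a,n,\alpha,\e$. Once the conventions are aligned, the uniqueness statement of Theorem~\ref{thm2} together with the residual-gauge pin-down (one-point functions, or the Virasoro and dilaton constraints) finishes the argument; the auxiliary steps --- the Laguerre--Selberg integral and the discrete Laplacian of $\log C$ --- are routine.
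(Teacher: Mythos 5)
Your proposal is correct in outline and takes a genuinely different route from both of the paper's proofs. The paper's primary proof compares explicit generating series of correlators on the two sides: the matrix $M(\lambda)$ of Proposition~\ref{thm1}, built from the wave functions \eqref{psiaexp}--\eqref{psibexp}, is shown to be conjugate to the matrix $R_{\rm GGR}(\lambda)$ from the isomonodromic computation of \cite{GGR} by an explicit diagonal matrix $T$, whence all correlators of order $\geq 2$ agree, with the one-point functions and the constant terms checked separately; the second proof (\S\ref{newsection5}) derives the cut-and-join representation $Z_{\rm LUE1}=e^{W_{\rm Laguerre}}1$ from the LUE Virasoro constraints of \cite{AvM,HH} and matches it against \eqref{zdessinsew1}. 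You instead take Theorem~\ref{thm2} as the input on the dessin side, use the classical orthogonal-polynomial/Toda correspondence (Adler--van Moerbeke rather than GGR's Riemann--Hilbert analysis) to exhibit the dressed LUE partition function as a Toda tau-function with the same initial data \eqref{initialvaluetodadessins}, and finish by uniqueness of the Toda solution plus gauge-fixing; this is essentially the route the authors themselves gesture at just after Proposition~\ref{Zoneweone1}, run in the reverse direction. Your approach avoids the explicit formulas \eqref{LUEGGRM} entirely, at the cost of the normalization bookkeeping you flag (which does work out: $C$ times the prefactor of \eqref{defzlue1intro} times the Weyl constant collapses to $\e^{-\frac1{12}+\alpha n+\alpha^2}(2\pi)^{-n-\alpha}\prod_{j=0}^{n-1}h_j$, i.e.\ exponential-linear in $n$ times the product of squared norms, and the Laguerre recurrence coefficients $2n+\alpha+1$, $n(n+\alpha)$ rescale to \eqref{initialvaluetodadessins}) and of importing the LUE Virasoro constraints as an external input. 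One caution on the last step: to remove the residual linear-in-${\bf p}$ ambiguity, do not rely on $\e$-periodicity in $x$ of its coefficients alone; instead use the $L_0$-constraints on both sides to express the one-point functions through the already-matched two-point functions, exactly as is done in the paper's proof of Theorem~\ref{thm2}.
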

We will also give another proof of this corollary by 
using the Virasoro constraints \cite{AvM, CDOC, GGR, HH, Zhou2, Zhou3, Zograf}. 
We note that matrix models for Grothendieck's dessin counting were suggested by 
Ambj{\o}rn and Chekhov~\cite{AC}.

The duality given by~\eqref{identityzdzo} has many important consequences.
First, because the normalized dessin partition is a tau-function of the KP hierarchy,
it follows from~\eqref{identityzdzo} that so is the normalized LUE partition function.
We will refer to it as the {\em LUE tau-function} of the KP hierarchy.
Since the affine coordinates of the dessin tau-function (of the KP hierarchy) are explicitly known,
so are the affine coordinates of the LUE tau-function.
Secondly, the LUE correlators have recently been shown to be related to
strictly monotone Hurwitz numbers and weakly monotone Hurwitz numbers
\cite{CDOC, GGR, GGN1, GGN2, GGN3}.
So a consequence of our result is a relationship between numbers of dessins
and 
strictly monotone Hurwitz numbers (see Corollary~\ref{cornklhurwitz}).
Thirdly, 
one can recover \cite[Theorem~1.5]{GGR} (see \S~\ref{newsection3})  that relates strictly monotone Hurwitz numbers to
modified GUE partition function and hence to special triple Hodge integrals
by the Hodge-GUE correspondence \cite{DLYZ}.

The mystery of the appearance of the factor \eqref{correctionfactordesssins} dissolves
when one looks at it from the perspective of matrix model theory.
It is known from the orthogonal polynomial theory (cf.~e.g.~\cite{AvM,Deift,mehta}) that the LUE
model up to a constant gives a particular solution to the Toda lattice hierarchy, which we call the {\it LUE/dessin solution}.
The significance of the correction factor in~\eqref{correctionfactordesssins} is then clear in this context:
It is a normalization constant 
such that multiplying by this factor makes~$Z_{\rm dessins}$ a tau-function
of the LUE/dessin solution (cf.~\cite{CDZ,DY1,DZ} for the meaning of tau-function here); 
see Theorem~\ref{thm2}; a similar result for the GUE solution could be found in the Appendix of~\cite{DY1}.

However another  mysterious fact is that this factor is related to a factor  in Keating-Snaith Conjecture \cite{KS}
on the moments of Riemann zeta function.
Keating and Snaith proved that for $\Re(s) > -1$, the moments of power of the characteristic
polynomial in circular unitary ensemble (CUE) are given by the following formula:
\beq
\corr{|Z(U,\theta)|^s}_{U(N)}
= \prod_{j=1}^N \frac{\Gamma(j)\Gamma(j+s)}{(\Gamma(j+s/2))^2},
\eeq
where $Z(U,\theta)$ is the characteristic polynomial of the unitary matrix $U$:
\beq
Z(U,\theta) = \det \bigl(I-Ue^{-i\theta}\bigr) = \prod_{j=1}^N \bigl(1-e^{i(\theta_j-\theta)}\bigr),
\eeq
where $e^{i\theta_1}, \dots, e^{i\theta_N}$ are eiegenvalues of $U$.
From this they get:
\beq
f_{CUE}:=\lim_{N\to \infty} \frac{1}{N^{\lambda^2}} \corr{|Z(U,\theta)|^{2\lambda}}_{U(N)}
= \frac{(G(1+\lambda))^2}{G(1+2\lambda)}.
\eeq
Their conjecture is that for the moments of the Riemann zeta function \cite{KS, NY},
\beq
\lim_{T\to \infty} \frac{1}{(\log T)^{\lambda^2}} 
\frac{1}{T} \int_0^T \biggl|\zeta\big(\frac{1}{2}+it\big)\biggr|^{2\lambda}dt
= \frac{(G(1+\lambda))^2}{G(1+2\lambda)} A(\lambda),
\eeq
where $A(\lambda)$ is the arithmetic factor:
\beq
A(\lambda) = \prod_{p \; \text{prime}} \Biggl[(1-\frac{1}{p})^{\lambda^2}
\sum_{m=0}^\infty \biggl(\frac{\Gamma(\lambda+m)}{m!\Gamma(\lambda)}\biggr)^2p^{-m}\Biggr].
\eeq
Furthermore, Keating and Snaith proved the following formula:
\beq
\corr{|Z(U,\theta)|^te^{is \Img \log Z(U,\theta)}}_{U(N)}
= \prod_{j=1}^N \frac{\Gamma(j)\Gamma(j+t)}{\Gamma(j+(t+s)/2)\Gamma(j+(t-s)/2)},
\eeq
From this one gets \cite{BY}:
\beq \label{eqn:KS2}
\lim_{N\to \infty} \frac{1}{N^{(t^2-s^2)/4}} \corr{|Z(U,\theta)|^te^{is \Img \log Z(U,\theta)}}_{U(N)}
= \frac{G(1+\frac{t+s}{2})G(1+\frac{t-s}{2})}{G(1+t)}.
\eeq
By \eqref{eqn:G-Reflection}, the correction factor \eqref{correctionfactordesssins} can
be rewritten as:
\beq
\e^{-\frac1{12} +\frac{v^2}{\e^2}} (2 \pi)^{-\frac{(u+v)}{\e}}
 \exp \int_0^{\frac{u}\e} \pi t \cot (\pi t) dt \, 
\frac{G(1-\frac{u}\e)G(1+\frac{v}\e)}{G(1+\frac{v-u}\e)},
\eeq
which is almost the right-hand side of \eqref{eqn:KS2} with $t= \frac{v-u}{\e}$, $s= \frac{v+u}{\e}$.
Such coincidence suggests some connections between dessins and Riemann zeta function to be discovered in the future.

In fact the duality between dessin counting and LUE suggests another connection between
dessins and Riemann zeta function.
Note in physics literature, LUE is referred to as the generalized Penner model \cite{DV}.
For the reader's convenience
we recall some relevant facts about Penner model and the generalized Penner model in Appendix~\ref{appendixB}.
Originally, the Penner model was devised to prove the Harer-Zagier formula for the orbifold
Euler characteristics of $\cM_{g,n}$ by a better matrix model approach:
\beq
\chi(\cM_{g,n}) = (-1)^n \frac{(2g-3+n)!(2g-1)}{(2g)!} B_{2g}
= (-1)^{n-1} \frac{(2g-3+n)!}{(2g-2)!} \zeta(1-2g).
\eeq
The generalized Penner model, $c=1$ noncritical string theory and topological
string theory on the conifold are known in the physics literature to be related to each other,
and so our results suggest a connection of dessin counting with these theories. 
In viewpoint of Frobenius manifolds, there are deep connections among GUE, LUE, 
and the topological $\mathbb{P}^1$-sigma model. See the remarks at the end of Appendix~\ref{appendixB}.
We hope to explore such connections in future research.

Note that Toda lattice hierarchy and Ablowitz-Ladik hierarchy are two different reductions
of the 2-Toda hierarchy (cf.~e.g.~\cite{CY, DY1, UT}).
They are related by a transformation exchanging space and time variables~\cite{CDZ,Du0,V}.
The equivariant GW theory of the resolved conifold (local $\bP^1$) with anti-diagonal action is conjectured
by Brini~\cite{Brini} (see also~\cite{BCR, LYZZ}) to be governed by the Ablowitz-Ladik hierarchy.
We hope to investigate applications of such connections to 
generalize the results of this work 
 in the future.
 
\smallskip

The rest of the paper is organized as follows. In \S\,\ref{newsection2} we review on dessins, LUE and
Toda lattice. In \S\,\ref{newsection3} we prove Theorem~\ref{thm2} and Corollary~\ref{corzdzo}.  
In \S\,\ref{newsection4}, we apply the Dubrovin--Zhang method for
computing dessin correlators.
In \S\,\ref{newsection5} we give another proof of Corollary~\ref{corzdzo} and study its 
consequences. Concluding remarks are given in \S\,\ref{newsection6}.

\vspace{0.2in}
{\bf Acknowledgements}.
The second-named author is partly supported by NSFC grants 11661131005
and 11890662.

\section{Review on dessins, LUE, and one-dimensional Toda chain}\label{newsection2}
In this section, we review
Grothendieck's dessin counting, the normalized LUE partition function and the MR method~\cite{DY1,Y} to tau-functions for the Toda lattice hierarchy.

\subsection{Review on the enumeration of Grothendieck's dessins}\label{newsection21121}

In view of the definition~\eqref{deffdessins} it is obvious that
\beq\label{homogeneity3}
\sum_{j\geq1}  (j-1)  p_j  \frac{\p \F_{\rm dessins}}{\p p_j} - \e  \frac{\p \F_{\rm dessins}}{\p \e} - u  \frac{\p \F_{\rm dessins}}{\p u} -
v  \frac{\p \F_{\rm dessins}}{\p v} = 0
\eeq
and
\beq\label{inifdessins}
\F_{\rm dessins}(u,v,{\bf 0};\e)  \equiv 0,
\eeq
as well as that
\beq
\langle \tau_{\mu_1} \cdots \tau_{\mu_m} \rangle = \frac{\p \F_{\rm dessins}(u,v,{\bf p};\e)}{\p p_{\mu_1} \dots \p p_{\mu_m}}  \bigg|_{{\bf p}={\bf 0}} .
\eeq

According to~\cite{KZ}, the dessin partition function $Z_{\rm dessins}$
satisfies the following Virasoro constraints:
\beq\label{virazdessins}
L_b Z_{\rm dessins} = 0 , \quad b\geq 0,
\eeq
where
\beq\label{viraoperatorsdessins}
L_b = (u+v)  b  \frac{\p }{\p p_b} + \sum_{j\geq1}  
(b+j) \tilde{p}_j \frac{\p }{\p p_{b+j}}
+ \frac1{\e^2}\sum_{i,j\ge1\atop i+j=b}  i  j  \frac{\p^2 }{\p p_i \p p_j}  + \frac{uv}{\e^2}  \delta_{b,0}
\eeq
are linear operators satisfying the Virasoro commutation relations:
\beq
[L_{b_1},L_{b_2}] = (b_1-b_2)  L_{b_1+b_2} , \quad \forall \, b_1,b_2\geq 0.
\eeq
Here, $\tilde p_j=p_j-\delta_{j,1}$ as in the Introduction.

The dilaton equation~\eqref{dilatonforZ121} written 
for the corrected dessin partition function~$Z$ follows from~\eqref{homogeneity3}, the $b=0$ case of~\eqref{virazdessins} and the 
definition~\eqref{correctionfactordesssins}.

It follows from the Virasoro constraints~\eqref{virazdessins} and the properties \eqref{homogeneity3}--\eqref{inifdessins} that
\beq\label{zdessinsew1}
Z_{\rm dessins} = e^{W} (1)
\eeq
(see~\cite{KZ,Zhou2}).
Here $W$ denotes the following cut-and-joint type linear operator:
\beq
W := (u+v)  \Lambda_1 + M_1 + \frac{uvp_1}{\e^2},
\eeq
where
\begin{align}
& \Lambda_1 = \sum_{i\geq2} (i-2)  p_i  \frac{\p }{\p p_{i-1}} , \\
& M_1 = \sum_{i\geq2}  \sum_{j=1}^{i-1}  \biggl((i-1)  p_j  p_{i-j}  \frac{\p}{\p p_{i-1}} + \e^2 j  (i-j)  p_{i+1}  \frac{\p^2}{\p p_j \p p_{i-j}}\biggr).
\end{align}
The following formula is proved in \cite{Zhou2}:
\beq
Z_{\rm dessins} = \sum_{\mu \in \cP} s_\mu  \prod_{\Box\in \mu} \frac{(u+c(\Box)) (v+c(\Box))}{h(\Box)},
\eeq
where the summation is taken over the set $\cP$ of all partitions,
and in the product
$\Box$ runs through the boxes of Young diagram of  the partition $\mu$,
and $c(\Box)$ and $h(\Box)$ denotes the content and hook length of $\Box$ respectively.
In the fermionic picture,
the dessin tau-function is given by a Bogoliubov transformation:
\beq \label{eqn:Z-A}
Z_{\rm dessins} = \exp\biggl(\sum_{m,n \geq 0} A_{m,n} \psi_{-m-\frac{1}{2}}\psi^*_{-n-\frac{1}{2}}\biggr) \vac,
\eeq
where the coefficients $A_{m,n}$ as explicitly given
as follows:
\beq \label{eqn:Amn}
A_{m, n}
= \frac{(-1)^n uv}{(m+n+1)m!n!}
\prod_{j=1}^{m} (u+j)(v+j) \prod_{i=1}^n (u-i)(v-i).
\eeq
Based on this we then get a formula for the $n$-point functions
associated with the dessin counting as stated in Theorem A by the general result in \cite{Zhou}.

\subsection{Review on LUE}
As in \S~\ref{section1}, denote by $\mathcal{H}_n^+$ the space of positive hermitian matrices of size~$n$.
Define {\it the normalized LUE partition function of size~$n$}~\cite{AvM, CDOC, GGR} by
\beq\label{defznlue1}
Z_n^{\rm LUE1}({\bf p};\alpha;\e) =
\frac{G(\alpha+1) \, \e^{-n^2-\alpha n}}{\pi^{\frac{n(n-1)}{2}}  G(n+\alpha+1)}
\int_{\mathcal{H}^+_n}  (\det{M})^\alpha  e^{- \frac1\e  {\rm tr} \, V(M; {\bf p}) }  dM,
\eeq
where $\alpha$ is a parameter,
$G(z)$ denotes the Barnes $G$-function, 
$V(M;{\bf p})$ is defined in~\eqref{vmdef}, and 
$dM$ is the invariant measure given by~\eqref{invmeasure}. 
Here we assume that ${\rm Re} (\alpha)>-1$. Later from the polynomiality of the connected LUE correlators we
can also view~$\alpha$ as an indeterminate.
The sum $n+\alpha=:w$ is called the {\it Wishart parameter},
as it is the size of the corresponding {\it Wishart matrix} (cf.~e.g.~\cite{CDOC}) for the case that $n+\alpha$ is a nonnegative integer.
For more details about the definition and literature see~e.g.~\cite{AvM, GGR}.
We note that, in the definition~\eqref{defznlue1}, the partition function $Z_n^{\rm LUE1}({\bf p};\alpha;\e)$
is understood as a power series of $p_1,p_2,\cdots$, namely, it should be
understood in the way that one first Taylor expands the integrand
$(\det{M})^\alpha  e^{- \frac1\e  {\rm tr} \, V(M;{\bf p}) }$ with respect to~${\bf p}$, and then
do the integration for the coefficient of each monomial of~${\bf p}$. Clearly, 
the normalized LUE partition function
$Z_{\rm LUE1}$ defined in~\eqref{defzlue1intro} in the Introduction relates to $Z_n^{\rm LUE1}$ by
\begin{align}
\label{defzonelue} Z_{\rm LUE1}(x, {\bf p}; a;\e) = Z_{x/\e}^{\rm LUE1}({\bf p};a/\e;\e).
\end{align}

It follows immediately from the definition~\eqref{defznlue1} that
\beq
\label{dilatonlue1}
\sum_{j\geq1} \tilde p_j \frac{\p Z_n^{\rm LUE1}({\bf p};\alpha;\e)}{\p p_j}
+ \e  \frac{\p Z_n^{\rm LUE1}({\bf p};\alpha;\e)}{\p \e}
= 0 .
\eeq
Here and below, $\tilde p_j=p_j-\delta_{j,1}$. 
Let us also prove another property of $Z_n^{\rm LUE1}({\bf p};\alpha;\e)$ in the following lemma.
\begin{lemma}
The power series $Z_n^{\rm LUE1}({\bf p};\alpha;\e)$ satisfies that
\beq\label{lue111}
Z_n^{\rm LUE1}({\bf 0};\alpha;\e) \equiv 1.
\eeq
\end{lemma}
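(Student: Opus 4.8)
The plan is to note that specializing ${\bf p}={\bf 0}$ removes all the higher monomials from the potential in~\eqref{vmdef}, so that the constant term in ${\bf p}$ of the power series $Z_n^{\rm LUE1}$ reduces to a normalized Laguerre matrix integral with a purely linear potential, which can be evaluated in closed form; the assertion~\eqref{lue111} then amounts to checking that the scalar prefactor in~\eqref{defznlue1} has been tuned precisely so that this value equals $1$.

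Concretely, since $V(M;{\bf 0})=M$ by~\eqref{vmdef}, the ${\bf p}$-constant term of $Z_n^{\rm LUE1}({\bf p};\alpha;\e)$ is
\[
\frac{G(\alpha+1)\,\e^{-n^2-\alpha n}}{\pi^{\frac{n(n-1)}{2}}\,G(n+\alpha+1)}\int_{\mathcal{H}_n^+}(\det M)^{\alpha}\,e^{-\frac1\e\,{\rm tr}\,M}\,dM,
\]
a genuinely convergent integral for $\e>0$ and $\Re(\alpha)>-1$. First I would diagonalize $M=U\Lambda U^{\dagger}$, $\Lambda=\mathrm{diag}(\lambda_1,\dots,\lambda_n)$ with $\lambda_i>0$, and apply the Weyl integration formula for the measure~\eqref{invmeasure}. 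This turns the integral into $\frac{\pi^{n(n-1)/2}}{\prod_{j=1}^{n}j!}\int_{(0,\infty)^n}\prod_{i}\lambda_i^{\alpha}e^{-\lambda_i/\e}\,\Delta(\lambda)^2\,d^n\lambda$, where $\Delta(\lambda)=\prod_{i<j}(\lambda_i-\lambda_j)$; the overall constant $\pi^{n(n-1)/2}/\prod_{j=1}^n j!$ can be pinned down by comparison with a known Gaussian normalization such as $\int_{\mathcal{H}_n}e^{-\frac12{\rm tr}\,M^2}\,dM$. Next I would substitute $\lambda_i\mapsto\e\lambda_i$ to extract the factor $\e^{n\alpha+n^2}$, leaving $\int_{(0,\infty)^n}\prod_i\lambda_i^{\alpha}e^{-\lambda_i}\Delta(\lambda)^2\,d^n\lambda$, which is evaluated by the Laguerre case of the Selberg integral and equals $\prod_{j=1}^{n}j!\,\Gamma(\alpha+j)$.

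Finally, using the Barnes functional equation $G(z+1)=\Gamma(z)G(z)$ to rewrite $\prod_{j=1}^{n}\Gamma(\alpha+j)=G(n+\alpha+1)/G(\alpha+1)$, the matrix integral collapses to $\pi^{n(n-1)/2}\,\e^{n\alpha+n^2}\,G(n+\alpha+1)/G(\alpha+1)$, which cancels exactly against the prefactor displayed above, leaving $1$.

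I do not expect a genuine obstacle here; the only points requiring care are (i) interpreting $Z_n^{\rm LUE1}({\bf 0};\alpha;\e)$ correctly as the ${\bf p}$-constant term of the power series, so that it coincides with the convergent integral written above, and (ii) fixing the overall constant in the Weyl integration formula consistently with the conventions of~\eqref{invmeasure}. Alternatively, one can skip the eigenvalue reduction altogether by invoking the closed form of the complex multivariate Gamma (Wishart) integral, $\int_{\mathcal{H}_n^+}(\det M)^{\alpha}e^{-{\rm tr}(AM)}\,dM=\pi^{n(n-1)/2}\bigl(\prod_{j=1}^{n}\Gamma(\alpha+j)\bigr)(\det A)^{-\alpha-n}$, specialized to $A=\e^{-1}I$, which produces the same cancellation in one step.
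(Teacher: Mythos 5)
Your proposal is correct and follows essentially the same route as the paper: reduce the ${\bf p}={\bf 0}$ specialization to the matrix integral $\int_{\mathcal{H}_n^+}(\det M)^\alpha e^{-\frac1\e\tr M}\,dM$, rescale $M\mapsto\e M$ to extract $\e^{n^2+n\alpha}$, and cancel the closed-form value $\pi^{n(n-1)/2}G(n+\alpha+1)/G(\alpha+1)$ against the prefactor in~\eqref{defznlue1}. The only difference is that the paper simply cites this evaluation from~\cite{GGR}, whereas you derive it via the Weyl integration formula, the Laguerre--Selberg integral, and the Barnes functional equation $G(z+1)=\Gamma(z)G(z)$ — all of which check out.
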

\begin{proof}
According to~\cite{GGR}, we know that
\beq
\int_{\mathcal{H}^+_n}  (\det{M})^\alpha  e^{-{\rm tr} M }  dM = \frac{\pi^{\frac{n(n-1)}{2}}  G(n+\alpha+1)}{G(\alpha+1)},
\eeq
which, after performing $M\to \e^{-1} M$ in the integration, implies formula~\eqref{lue111}.
\end{proof}
According to~\eqref{lue111}, the logarithm $\log Z_n^{\rm LUE1}({\bf p};\alpha;\e)$ is again a power series of~${\bf p}$.
For $m\geq1$, and $\mu_1,\dots,\mu_m\geq1$, the following derivative evaluated at ${\bf p}={\bf 0}$
\beq
\mu_1\cdots\mu_m \frac{\p^m\log Z_n^{\rm LUE1}({\bf p};\alpha;1)}{\p p_{\mu_1} \dots p_{\mu_m}}\bigg|_{{\bf p}={\bf 0}}
\eeq
is called a {\it connected LUE correlator}, often denoted by
$\langle \tr \, M^{\mu_1} \cdots \tr \, M^{\mu_m}\rangle_c$.

Let us now briefly recall the topological meaning of a connected LUE correlator obtained by
Cunden, Dahlqvist, and O'Connell~\cite{CDOC}.
Recall that a {\it partition} $\mu=(\mu_1,\mu_2,\dots)$ is a sequence of weakly decreasing
nonnegative integers with $\mu_k = 0$ for sufficiently large~$k$.
The {\it length} $\ell(\mu)$ is the number of the
nonzero parts of $\mu$, the {\it weight} $|\mu|:= \mu_1 + \mu_2 + \cdots$,
and $\mu$ is also called a partition of $|\mu|$.
Denote by $\mathcal{P}_d$ the set of all partitions of~$d$.
For $g,d$ being nonnegative integers and for $\mu,\nu\in\mathcal{P}_d$,
the strictly monotone double Hurwitz number $h_g(\mu,\nu)$ in genus~$g$ and degree~$d$
is defined as the number of tuples $(\alpha,\tau_1,\dots,\tau_r, \beta)$ with $r= \ell(\mu)+\ell(\nu)+2g-2$ satisfying
\begin{itemize}
\item $\alpha,\beta$ are permutations of $\{1,\dots,d\}$ of cycle type $\mu,\nu$, respectively, and $\tau_1,\dots,\tau_r$
are transpositions such that $\alpha\tau_1\cdots\tau_r=\beta$
\item the subgroup generated by $\alpha,\tau_1,\dots,\tau_r$ acts transitively on $\{1,\dots,d\}$
\item writing $\tau_j=(a_j,b_j)$ with $a_j<b_j$, $j=1,\dots,r$, then we have $b_1<\cdots<b_r$.
\end{itemize}
The following formula is proved in~\cite{CDOC}: For $\mu\in\mathcal{P}_d$, as $n\to\infty$,
\beq\label{CDOCexpansion}
n^{m-d-2} \langle \tr \, M^{\mu_1} \cdots \tr \, M^{\mu_{m}}\rangle_c
= \sum_{g\geq0} \frac1{n^{2g}} \sum_{s=1}^{1-2g+d-m} \frac{z_\mu}{|\mu|!} \sum_{\nu \in \mathcal{P}_d\atop \ell(\nu)=s}h_g(\mu;\nu) \, c^s, \quad c>1-\frac1n.
\eeq
Here $c=1+\frac{\alpha}n$, $z_\mu=\prod_{i\geq1} i^{n_i} n_i!$ with $n_i$ being the multiplicity of~$i$
in~$\mu$.

Before ending this section, let us point out that the above two properties~\eqref{dilatonlue1}--\eqref{lue111} for
$Z_n^{\rm LUE1}({\bf p};\alpha;\e)$ translate into those for~$Z_{\rm LUE1}(x, {\bf p}; a;\e)$ as follows:
\begin{align}
\label{dilatonlueone}
& \sum_{j\geq1}  \tilde p_j  \frac{\p Z_{\rm LUE1}}{\p p_j}
+ \e \frac{\p Z_{\rm LUE1}}{\p \e} + x \frac{\p Z_{\rm LUE1}}{\p x} + a \frac{\p Z_{\rm LUE1}}{\p a}
= 0 , \\
\label{zone1}
&Z_{\rm LUE1}(x; {\bf 0};a;\e) \equiv 1.
\end{align}
These two properties will be important for us in the next section.
We also define the {\it normalized LUE free energy} $\F_{\rm LUE1}(x, {\bf p}; a;\e)$ by
$\F_{\rm LUE1}(x, {\bf p}; a;\e)=\log Z_{\rm LUE1}(x; {\bf p};a;\e)$.
From~\eqref{zone1} we know that $\F_{\rm LUE1}(x, {\bf p}; a;\e)$ is a power series of~${\bf p}$.

\subsection{Review on the MR method to tau-functions for Toda lattice}
Let
\beq
L= \Lambda + V(x) + W(x) \Lambda^{-1}
\eeq
be a linear difference operator, called the {\it Lax operator}, where $\Lambda:f(x)\mapsto f(x+\e)$ is the shift operator. The
{\it Toda lattice hierarchy} is a system of evolutionary differential-difference equations, which can be defined by
\beq\label{TLH}
\frac{\p L}{\p t_i} = \frac1{\e}\bigl[(L^{i+1})_+,L\bigr], \quad i\geq0.
\eeq
Here, for a difference operator~$P$ written in the form $P=\sum_{k\in \ZZ} P_k \Lambda^k$,
$P_+$ is defined as $\sum_{k\geq0} P_k \Lambda^k$. The $\p_{t_0}$-flow reads explicitly as follows:
\begin{align}
& \frac{\p V}{\p t_0} = \frac1{\e} (W(x+\e)-W(x)), \quad \frac{\p W}{\p t_0} = \frac1{\e} W(x) (V(x)-V(x-\e)),
\end{align}
which is known as the {\it Toda equation}.
Denote by
\beq\mathcal{A}=\ZZ[V(x), W(x), V(x\pm \e), W(x\pm \e), V(x\pm 2\e), W(x\pm 2\e), \cdots]\eeq
the polynomial ring,
and denote
\beq
U(\lambda) = \begin{pmatrix} V(x)-\lambda & W(x) \\ -1 & 0 \\ \end{pmatrix}.
\eeq
Recall that~\cite{DY1} the {\it basic matrix resolvent} $R(\lambda)$ is defined as
the unique element in ${\rm Mat}(2,\mathcal{A}[[\lambda^{-1}]])$ satisfying
\begin{align}
& \Lambda(R(\lambda)) U(\lambda) - U(\lambda) R(\lambda) =0,\\
& \tr \, R(\lambda) =1, \quad \det R(\lambda)=0,\\
& R(\lambda) - \begin{pmatrix} 1 & 0 \\ 0 & 0 \\ \end{pmatrix} \in {\rm Mat}(2,\mathcal{A}[[\lambda^{-1}]]\lambda^{-1}). \label{entrymr421}
\end{align}
Write
\beq
R(\lambda) = \begin{pmatrix} 1+\alpha(\lambda) & \beta(\lambda) \\ \gamma(\lambda) & -\alpha(\lambda) \\ \end{pmatrix}.
\eeq
Let $S_i:={\rm Coef} \bigl(\Lambda(\gamma(\lambda)), \lambda^{-i-2}\bigr)$, $i\geq0$, and
define~\cite{DY1} a sequence of elements $\Omega_{i,j} \in \mathcal{A}$ by
\beq
\sum_{i,j\geq0} \frac{\Omega_{i,j}}{\lambda^{i+2} \mu^{j+2}} = \frac{\tr \, R(\lambda)R(\mu)}{(\lambda-\mu)^2} - \frac{1}{(\lambda-\mu)^2}.
\eeq
Then we have~\cite{DY1}
\begin{align}
&\Omega_{i,j} =\Omega_{j,i}, \quad \frac{\p \Omega_{i,j}}{\p t_r}=\frac{\p \Omega_{j,r}}{\p t_i}, \label{taustr1} \\
&(\Lambda-1)(\Omega_{i,j}) = \e\frac{\p S_i}{\p t_j}, \label{taustr2} \\
&(1-\Lambda^{-1}) (S_i) = \e\frac{\p \log W}{\p t_i}. \label{taustr3}
\end{align}
Therefore,
 for an arbitrary power-series-in-${\bf t}$ solution $(V(x,{\bf t};\e), W(x,{\bf t};\e))$ to the Toda lattice hierarchy,
there exist a power series $\tau(x,{\bf t};\e)$ in ${\bf t}=(t_0,t_1,t_2,\dots)$ such that
\begin{align}
& \e^2\frac{\p^2 \log\tau(x,{\bf t};\e)}{\p t_i \p t_j}=\Omega_{i,j}, \label{deftau1} \\
& \e\frac{\p}{\p t_i} \biggl(\log \frac{\tau(x+\e,{\bf t};\e)}{\tau(x,{\bf t};\e)}\biggr) = S_i, \label{deftau2}\\
& \frac{\tau(x+\e,{\bf t};\e)\tau(x-\e,{\bf t};\e)}{\tau(x,{\bf t};\e)^2} = W(x,{\bf t};\e). \label{deftau3}
\end{align}
We call $\tau(x,{\bf t};\e)$
the {\it tau-function} of the solution $(V(x,{\bf t};\e), W(x,{\bf t};\e))$ to
the Toda lattice hierarchy~\cite{DY1,DZ,Y}.

\begin{remark}
In the literature (see e.g.~\cite{AvM}), the requirement~\eqref{deftau3} is usually not specified and the 
correction factor in~\eqref{correctionfactordesssins} is not relevant in their consideration. But for us, this correction factor plays 
a crucial role because it is required in the Dubrovin--Zhang approach. 
\end{remark}

The following formula is proved in~\cite{DY1}: for each $m\geq2$,
\beq\label{multiderivativesresolvent}
\sum_{i_1,\dots,i_m\ge0}
\frac{\e^m\frac{\p^m \log \tau(x,{\bf t};\e)}{\p t_{i_1} \cdots \p t_{i_m}}}
{\lambda_1^{i_1+2}\cdots \lambda_m^{i_m+2}} = - \sum_{\sigma\in S_m/C_m}
\frac{\tr \, \prod_{j=1}^m R(\lambda_{\sigma(j)})}{\prod_{j=1}^m (\lambda_{\sigma(j)}-\lambda_{\sigma(j+1)})} - \frac{\delta_{m,2}}{(\lambda-\mu)^2}.
\eeq

Let us continue and review another formula~\cite{Y} for generating series for logarithmic derivatives of $\tau(x,{\bf t};\e)$.
Note that power-series-in-${\bf t}$ solutions $(V(x,{\bf t};\e),W(x,{\bf t};\e))$
to equations~\eqref{TLH} are in one-to-one correspondence with their initial values
\beq
V(x,{\bf 0};\e)=:f(x,\e), \quad W(x,{\bf 0};\e)=:g(x,\e).
\eeq

Let us denote
\beq
s(x;\e)= - (1-\Lambda^{-1})^{-1} (\log g(x;\e)).
\eeq
Recall that two elements
\beq \psi_A(\lambda) = (1+{\rm O}(\lambda^{-1}))\lambda^{x/\e}, \quad
\psi_B(\lambda)=(1+{\rm O}(\lambda^{-1}))e^{-s(x;\e)}\lambda^{-x/\e}\eeq
are called forming {\it a pair of wave functions}~\cite{Y} of the initial Lax operator
\beq
L_{\rm ini}=\Lambda+f(x;\e) + g(x;\e)\Lambda^{-1},
\eeq
if they satisfy the following conditions:
\begin{align}
& L_{\rm ini} (\psi_A) = \lambda \psi_A(\lambda), \quad L_{\rm ini} (\psi_B) = \lambda \psi_B(\lambda), \label{pairwave1}\\
& d(\lambda):= \psi_A(\lambda) \, \Lambda^{-1}(\psi_B(\lambda)) -  \psi_B(\lambda) \, \Lambda^{-1}(\psi_A(\lambda)) = \lambda e^{-s(x-\e;\e)}.\label{pairwave2}
\end{align}
The following formula is proved in~\cite{Y}: for each $m\geq2$,
\beq
\sum_{i_1,\dots,i_m\ge0}
\frac{\e^m\frac{\p^m \log \tau(x,t;\e)}{\p t_{i_1} \cdots \p t_{i_m}}\big|_{{\bf t}={\bf 0}} }
{\lambda_1^{i_1+2}\cdots \lambda_m^{i_m+2}} = (-1)^{m-1} \frac{e^{m s(x-\e;\e)}}{\prod_{j=1}^m \lambda_j} \sum_{\sigma\in S_m/C_m}
\prod_{j=1}^m D(\lambda_{\sigma(j)},\lambda_{\sigma(j+1)}) - \frac{\delta_{m,2}}{(\lambda-\mu)^2}, \label{Y421}
\eeq
where $D(\lambda,\mu)$ is defined by
\beq\label{dpsiab}
D(\lambda,\mu)= \frac{\psi_A(\lambda) \Lambda^{-1}(\psi_B(\mu)) - \Lambda^{-1}(\psi_A(\lambda)) \psi_B(\mu)}{\lambda-\mu}.
\eeq

\section{Proofs of Theorem~\ref{thm2} and Corollary~\ref{corzdzo}}\label{newsection3}
In this section, 
we prove Theorem~\ref{thm2} and Corollary~\ref{corzdzo}, and then we introduce the corrected LUE partition function.

\begin{proof}[Proof of Theorem~\ref{thm2}] 
Consider the unique solution to the Toda lattice hierarchy~\eqref{TLH}
specified by the initial data~\eqref{initialvaluetodadessins}. Let $\tilde Z$ denote the tau-function 
of this solution. It suffices to show that $\log\tilde Z$ and $\log Z(x,{\bf p};a;\e)$ differ at most by an affine function of $x, {\bf p}$.
First of all, as a power series of~${\bf p}$, the constant term of~$\log \tilde Z$, i.e., 
$\log \tilde Z|_{{\bf p}={\bf 0}}$ should satisfy~\eqref{deftau3} at ${\bf p}=0$, and 
one can check that $\log Z(x,{\bf 0};a;\e)$ does satisfy this condition. Exponential of this constant term 
is the correction factor (cf.~\eqref{correctionfactordesssins} from the Introduction). So the constant terms of $\log \tilde Z$ and $\log Z(x,{\bf p};a;\e)$ agree.
Since $S_i$, $i\geq0$, and the coefficients of entries of $R(\lambda)$ are in the polynomial ring~$\mathcal{A}$, 
using~\eqref{deftau2}, \eqref{multiderivativesresolvent} and the initial values~\eqref{initialvaluetodadessins}, 
we know that $ \p_{p_j}(\Lambda-1)\log \tilde Z|_{{\bf p}={\bf 0}}$ as well as 
the logarithmic derivatives of~$\tilde Z$ of order higher than or equal to~2 with respect to~${\bf p}$ at ${\bf p}={\bf 0}$ 
have polynomial dependence in~$x$. It then suffices to show these derivatives of $\log\tilde Z$ and those of $\log Z$ agree. 
Without loss of generality we can now set $\epsilon=1$, namely,  
the initial Lax operator~$L_{\rm ini}$ is now given by 
\beq
L_{\rm ini} = \Lambda + (2n+\alpha+1) + n(n+\alpha)\Lambda^{-1}.
\eeq
\begin{lemma}
The elements
\begin{align}
& \psi_A(\lambda) := \lambda^{n} h(-\lambda,-n, -(n+\alpha)), \label{psiaexp}\\
& \psi_B(\lambda) := \lambda^{-n} \Gamma(1+n)\Gamma(1+n+\alpha) h(\lambda,n+1,n+\alpha+1). \label{psibexp}
\end{align}
form a pair of wave functions of~$L_{\rm ini}$. Here, 
\beq\label{defh}
h(\lambda,n,w):=\sum_{i\geq0} \frac1{i!}\frac{\prod_{r=0}^{i-1} \bigl((n+r)(w+r)\bigr)}{\lambda^i}= 1+\frac{n w}{\lambda}+\frac{nw (n+1) (w+1)}{2 \lambda ^2}+{\rm O}(\lambda^{-3}).
\eeq
\end{lemma}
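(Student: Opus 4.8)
The plan is to verify directly that the two claimed functions $\psi_A(\lambda)$ and $\psi_B(\lambda)$ satisfy the three defining conditions of a pair of wave functions: the eigenvalue equations $L_{\rm ini}(\psi_A)=\lambda\psi_A$ and $L_{\rm ini}(\psi_B)=\lambda\psi_B$ in \eqref{pairwave1}, the normalization prefactors (i.e.\ that $\psi_A(\lambda)=(1+{\rm O}(\lambda^{-1}))\lambda^{n}$ and $\psi_B(\lambda)=(1+{\rm O}(\lambda^{-1}))e^{-s(x-1)}\lambda^{-n}$, recalling $x/\e=n$ since $\e=1$), and the Wronskian-type identity \eqref{pairwave2}. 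Since $W(x,{\bf 0};\e)=x(x+a)$ translates to $g(n)=n(n+\alpha)$ at $\e=1$, we have $s(n)=-(1-\Lambda^{-1})^{-1}\log\big(n(n+\alpha)\big)$, and one computes $e^{-s(n)}=\Gamma(n+1)\Gamma(n+\alpha+1)\cdot(\text{const})$ up to the shift ambiguity in inverting $(1-\Lambda^{-1})$; this is exactly the prefactor attached to $\psi_B$, so condition on the leading asymptotics of $\psi_B$ should reduce to a telescoping identity for $\log\Gamma$.

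The core computation is the eigenvalue equation. Applying $L_{\rm ini}=\Lambda+(2n+\alpha+1)+n(n+\alpha)\Lambda^{-1}$ to $\psi_A(\lambda)=\lambda^n h(-\lambda,-n,-(n+\alpha))$ and using $\Lambda:\,n\mapsto n+1$, one wants
\[
(n+1)^n h(-\lambda,-n-1,-(n+\alpha)-1)\;\lambda^{-n}\;+\;(2n+\alpha+1)\,h(-\lambda,-n,-(n+\alpha))\;+\;n(n+\alpha)(n-1)^n h(-\lambda,-n+1,-(n+\alpha)+1)\,\lambda^{n-1}\cdot\lambda^{-n}\cdot(\cdots)
\]
to collapse to $\lambda\,h(-\lambda,-n,-(n+\alpha))$; in practice I would strip the common $\lambda^n$ factor and match the coefficient of $\lambda^{-i}$ on both sides. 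Writing $c_i(n,w):=\frac1{i!}\prod_{r=0}^{i-1}(n+r)(w+r)$ for the coefficients in \eqref{defh}, this becomes a finite three-term recursion among $c_i(-n,-w)$, $c_i(-n-1,-w-1)$, $c_{i}( -n+1,-w+1)$ (with an index shift from the explicit $\lambda$ on the right), which is a routine — if slightly tedious — binomial identity; the function $h$ is essentially a confluent hypergeometric ${}_2F_0$, so this is recognizing the contiguous-relation it satisfies. The same check for $\psi_B$ is parallel: the shift $n\mapsto n+1$ in the arguments $h(\lambda,n+1,n+\alpha+1)$ and the $\Gamma$-prefactor conspire so that the difference equation again reduces to the contiguous relation, now for $h(\lambda,\cdot,\cdot)$ rather than its reflection.

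Finally, for \eqref{pairwave2} I would compute $d(\lambda)=\psi_A(\lambda)\Lambda^{-1}(\psi_B(\lambda))-\psi_B(\lambda)\Lambda^{-1}(\psi_A(\lambda))$. Both $\psi_A$ and $\psi_B$ solve the \emph{same} second-order difference equation $L_{\rm ini}\psi=\lambda\psi$, so $d(\lambda)$ is automatically a ``first integral'': one has $\Lambda(d(\lambda))=W(x)\,d(\lambda)=n(n+\alpha)\,d(\lambda)/(\text{shift})$, which pins down $d(\lambda)$ up to a $\lambda$-dependent constant, and that constant is fixed by expanding the leading terms as $\lambda\to\infty$ using \eqref{defh} — the $\lambda^{\pm n}$ factors cancel, leaving $d(\lambda)=\lambda\,\Gamma(n)\Gamma(n+\alpha)\cdot(\cdots)=\lambda\,e^{-s(n-1)}$ after matching with the explicit form of $e^{-s}$. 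The main obstacle I anticipate is purely bookkeeping: getting the normalization of $e^{-s}$ and the $\Gamma$-factor in $\psi_B$ to match exactly (the inverse of $(1-\Lambda^{-1})$ is only defined up to an additive $\e$-periodic function, so one must track which normalization makes \eqref{pairwave2} come out as stated), together with carrying the three-term contiguous relation for $h$ without sign errors in the reflected arguments $(-\lambda,-n,-w)$. Everything else is forced by the structure of the difference equation.
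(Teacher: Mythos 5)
Your proposal follows the same route as the paper, whose proof likewise just records $s(n;1)=-\log\Gamma(1+n)-\log\Gamma(1+n+\alpha)$ and then verifies conditions \eqref{pairwave1}--\eqref{pairwave2} directly; your identification of the eigenvalue equations with the three-term contiguous relation for the ${}_2F_0$-type series $h$ (equivalently, the monic Laguerre recurrence) and of \eqref{pairwave2} via the Casoratian first integral is exactly what that verification amounts to. Two small points to tidy: in your displayed equation $\Lambda(\lambda^{n})$ should read $\lambda^{n+1}$ rather than $(n+1)^{n}$ (and similarly for the $\Lambda^{-1}$ term), and the first-integral argument pins down $d(\lambda)$ only up to an $n$-independent factor of the form $1+{\rm O}(\lambda^{-1})$, so matching leading terms alone does not finish the job --- you must still check the subleading orders or evaluate at a convenient value of $n$ to conclude $d(\lambda)=\lambda e^{-s(n-1;1)}$ exactly.
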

\begin{proof}
Note that $s(n;1)$ for $L_{\rm ini}$ has the expression
\beq
s(n;1) = - \log\Gamma(1+n) - \log \Gamma(1+n+\alpha) .
\eeq
The statement is then
proved by verifying the conditions~\eqref{pairwave1}--\eqref{pairwave2}.
\end{proof}

Substituting~\eqref{psiaexp}--\eqref{psibexp} in~\eqref{dpsiab} we obtain
\begin{align}
e^{s(n-1;1)}  D(\lambda,\mu) = \frac1{\lambda-\mu} + \sum_{i,j\geq0} \frac{A_{i,j}}{\lambda^{i+1}\mu^{j+1}},
\end{align}
where $A_{i,j}$, $i,j\geq0$, are defined in~\eqref{defaij}. Therefore, by~\eqref{Y421} and~\eqref{connectedAformula} we find that the 
 derivatives of~$\log\tilde Z$ of order higher than or equal to~2 with respect to~${\bf p}$ at ${\bf p}={\bf 0}$ 
 coincide with those of $\log Z(n,{\bf p};a;1)$. 

Now recall that a product formula~\cite{Y} says that the entries of the basic matrix resolvent (see~\eqref{entrymr421}) can be expressed
by the pair of wave functions as follows:
\begin{align}
& \alpha(\lambda) = -1+\frac{\psi_A(\lambda) \Lambda^{-1}(\psi_B(\lambda))}{d(\lambda)} ,
\quad  \beta(\lambda)= - \frac{\psi_A(\lambda) \psi_B(\lambda)} {d(\lambda)}, \label{eq87}\\
& \gamma(\lambda) =  \frac{\Lambda^{-1}(\psi_A(\lambda)) \Lambda^{-1}(\psi_B(\lambda))}{d(\lambda)}, \label{eq88}
\end{align}
where we recall from~\eqref{pairwave2} that
$d(\lambda)= e^{-s(n-1;1)}\lambda$. So by~\eqref{deftau2},
\begin{align}
& j (\Lambda-1) \p_{p_{j}} \bigl(\log \tilde Z\bigr) |_{{\bf p}={\bf 0}} 
 = {\rm Coef} \bigl(\Lambda(\gamma(\lambda)), \lambda^{-j-1}\bigr) 
 = {\rm Coef} \biggl(\frac{\psi_A(\lambda) \psi_B(\lambda)}{\Lambda(d(\lambda))}, \lambda^{-j-1}\biggr) \label{eq89}\\
& \qquad =  {\rm Coef} \biggl(\frac{h(-\lambda,-n, -(n+\alpha)) h(\lambda,n+1,n+\alpha+1)}
{\lambda}, \lambda^{-j-1}\biggr). \nn
\end{align}
Here $j\ge1$. On the other hand, from~\eqref{virazfull} with $k=0$ we know that
\beq
 \sum_{j\geq1}  j p_j   \frac{\p \F}{\p p_{j}}
 + \frac{x(x+a)}{\e^2}  = \frac{\p \F}{\p p_1}.
\eeq
Differentiating both sides with respect to $p_j$ and then taking ${\bf p}={\bf 0}$ we obtain
\beq
 j \frac{\p \F}{\p p_j} \Big|_{{\bf p}={\bf 0}}
 = \frac{\p^2 \F}{\p p_1 \p p_j}\Big|_{{\bf p}={\bf 0}} = \frac{\p^2 \log \tilde Z}{\p p_1 \p p_j}\Big|_{{\bf p}={\bf 0}}.
\eeq
Then by using~\eqref{eq87} and the equation~(56) of~\cite{Y} we find that 
\beq\label{eq92}
j^2\langle\tau_{j}\rangle|_{u=n, v=n+\alpha,\e=1} = {\rm Coef} \bigl(h(-\lambda,-n,-(n+\alpha)) h(\lambda,n,n+\alpha), \lambda^{-j-1}\bigr),
\eeq
where $j\ge1$. Applying $\Lambda-1$ on both sides of~\eqref{eq92} and comparing it with~\eqref{eq89}, 
we find that $(\Lambda-1) \p_{p_{j}} \bigl(\log \tilde Z\bigr) |_{{\bf p}={\bf 0}}=(\Lambda-1)\langle\tau_{j}\rangle|_{u=n,v=n+\alpha,\e=1}$.
The theorem is proved.
\end{proof}

Let us continue and prove the following proposition.

\begin{prop}\label{thm1}
For each $m\geq2$, we have
\beq\label{dessinsformulaM}
\sum_{\mu_1,\dots,\mu_m\ge1}  \prod_{j=1}^{m} \mu_j 
\frac{\langle\tau_{\mu_1}\cdots\tau_{\mu_m}\rangle(n,w,1)}
{\lambda_1^{\mu_1+1}\cdots \lambda_m^{\mu_m+1}} = - \sum_{\sigma\in S_m/C_m}
\frac{\tr \, \prod_{j=1}^m M(\lambda_{\sigma(j)})}{\prod_{j=1}^m (\lambda_{\sigma(j)}-\lambda_{\sigma(j+1)})} - \frac{\delta_{m,2}}{(\lambda-\mu)^2},
\eeq
where $M(\lambda)$ is a two by two matrix given explicitly by
\beq\label{matrixresolventlue}
M(\lambda)=
\begin{pmatrix}
h(-\lambda,-n,-w) h(\lambda,n,w)  & - \frac{nw}{\lambda} h(-\lambda,-n,-w) h(\lambda,1+n,1+w) \\
\frac1{\lambda} h(-\lambda,1-n,1-w) h(\lambda,n,w)& - \frac{nw}{\lambda^2} h(-\lambda,1-n,1-w) h(\lambda,1+n,1+w)\\
\end{pmatrix}.
\eeq
Moreover, for $m=1$, we have
\beq\label{onepointlue}
1+\sum_{\mu\ge1}
\frac{\mu^2\langle\tau_{\mu}\rangle|_{\e=1}}
{\lambda^{\mu+1}} = h(-\lambda,-n,-w) h(\lambda,n,w).
\eeq
\end{prop}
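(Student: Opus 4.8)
The plan is to derive the matrix-resolvent form \eqref{dessinsformulaM} from the already-established wave-function formula \eqref{Y421} by identifying the two-by-two matrix $M(\lambda)$ with the basic matrix resolvent $R(\lambda)$ (up to conjugation and normalization) of the dessin/LUE Toda solution. By Theorem~\ref{thm2}, the corrected dessin partition function is a tau-function of the Toda lattice hierarchy with initial Lax operator $L_{\rm ini}=\Lambda+(2n+\alpha+1)+n(n+\alpha)\Lambda^{-1}$, so formula \eqref{multiderivativesresolvent} applies verbatim with $\langle\tau_{\mu_1}\cdots\tau_{\mu_m}\rangle(n,w,1)$ in place of the logarithmic $t$-derivatives of $\log\tau$, after the reindexing $t_i=p_{i+1}/(i+1)$ that converts $\partial_{t_i}$-derivatives (and the $\lambda^{-i-2}$ weights) into $\mu_j\partial_{p_{\mu_j}}$-derivatives (and the $\lambda^{-\mu_j-1}$ weights). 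Thus it suffices to show that $R(\lambda)$ for this solution, evaluated at ${\bf p}={\bf 0}$, equals $M(\lambda)$ as given in \eqref{matrixresolventlue}.

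First I would use the product formulas \eqref{eq87}--\eqref{eq88} expressing the entries $\alpha(\lambda),\beta(\lambda),\gamma(\lambda)$ of $R(\lambda)$ in terms of the pair of wave functions $\psi_A,\psi_B$, together with the explicit expansions \eqref{psiaexp}--\eqref{psibexp} of $\psi_A,\psi_B$ in terms of the hypergeometric-type series $h(\lambda,n,w)$ and the value $d(\lambda)=e^{-s(n-1;1)}\lambda$ computed just above. Since $\Lambda^{-1}$ sends $n\mapsto n-1$ (at $\epsilon=1$), one has $\Lambda^{-1}(\psi_A(\lambda))=\lambda^{n-1}h(-\lambda,-(n-1),-(w-1))=\lambda^{n-1}h(-\lambda,1-n,1-w)$ and similarly $\Lambda^{-1}(\psi_B(\lambda))$ produces the shifted Gamma prefactors and $h(\lambda,n,w)$. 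Substituting these into \eqref{eq87}--\eqref{eq88} and simplifying the Gamma-function ratios against $d(\lambda)$ and $\Lambda(d(\lambda))$ should yield exactly the four entries of $M(\lambda)$: in particular $1+\alpha(\lambda)=h(-\lambda,-n,-w)h(\lambda,n,w)$, $\beta(\lambda)=-\tfrac{nw}{\lambda}h(-\lambda,-n,-w)h(\lambda,1+n,1+w)$, $\gamma(\lambda)=\tfrac1\lambda h(-\lambda,1-n,1-w)h(\lambda,n,w)$, and $-\alpha(\lambda)$ matching the $(2,2)$-entry. I would double-check the $(1,1)$-entry against the already-derived $m=1$ statement \eqref{onepointlue}, which is precisely $1+\alpha(\lambda)=h(-\lambda,-n,-w)h(\lambda,n,w)$ and follows from \eqref{eq92} after noting that the generating series of $\mu^2\langle\tau_\mu\rangle$ is the $\lambda^{-j-1}$-expansion there; alternatively it is the $m=1$ (diagonal) specialization that one reads off from the resolvent directly.

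For the case $m\ge2$, once $R(\lambda)=M(\lambda)$ at ${\bf p}={\bf 0}$ is established, I would invoke \eqref{multiderivativesresolvent} specialized to ${\bf p}={\bf 0}$, or equivalently combine \eqref{Y421} with the identity $e^{s(n-1;1)}D(\lambda,\mu)=\widehat A(\lambda,\mu)=\tfrac1{\lambda-\mu}+\sum A_{i,j}\lambda^{-i-1}\mu^{-j-1}$ (already recorded in the proof of Theorem~\ref{thm2}) to see that the left-hand side of \eqref{dessinsformulaM} equals $(-1)^{m-1}\sum_{\sigma\in S_m/C_m}\prod_j \widehat A(\lambda_{\sigma(j)},\lambda_{\sigma(j+1)})-\tfrac{\delta_{m,2}}{(\lambda-\mu)^2}$, which is Theorem~A; then the purely algebraic identity between the product-of-$\widehat A$ form and the trace-of-product-of-$M$ form is exactly the general matrix-resolvent identity relating $D$ and $R$ from \cite{DY1,Y}, which holds once $R$ has the factorized shape \eqref{eq87}--\eqref{eq88}. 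The main obstacle I anticipate is purely bookkeeping: carefully tracking the $n\to n-1$, $w\to w-1$ shifts induced by $\Lambda^{-1}$ through the Gamma-function prefactors in \eqref{psiaexp}--\eqref{psibexp} and the normalization $d(\lambda)$, so that the cancellations land on precisely the entries written in \eqref{matrixresolventlue}; there is also a minor subtlety in verifying $\tr M(\lambda)=1$ and $\det M(\lambda)=0$, which are the defining normalization conditions of the basic matrix resolvent and serve as useful consistency checks on the computation.
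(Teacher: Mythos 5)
Your proposal is correct and takes essentially the same route as the paper: evaluate \eqref{multiderivativesresolvent} at ${\bf p}={\bf 0}$, compute the basic matrix resolvent there from the wave functions \eqref{psiaexp}--\eqref{psibexp} via the product formulas \eqref{eq87}--\eqref{eq88} (the Gamma prefactors cancel exactly against $d(\lambda)$, so your ``up to conjugation and normalization'' hedge is unnecessary --- $M(\lambda)$ \emph{is} the resolvent at ${\bf p}={\bf 0}$), and read off the $m=1$ case from \eqref{eq92}.
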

\begin{proof}
Formula~\eqref{dessinsformulaM} follows from~\eqref{multiderivativesresolvent} evaluated at ${\bf p}=0$, with $M(\lambda)$ being
the basic resolvent at ${\bf p}=0$ whose expression can be found by using~\eqref{psiaexp}--\eqref{psibexp} and \eqref{eq87}--\eqref{eq88}.
Formula~\eqref{onepointlue} was already given by~\eqref{eq92}. The proposition is proved.
\end{proof}

Formula~\eqref{eq92}, or say~\eqref{onepointlue}, is equivalent to known formulas in e.g.~\cite{Zhou2}. Formulas~\eqref{dessinsformulaM}--\eqref{onepointlue} lead to an efficient algorithm of computing the dessin correlators.

Based on Proposition~\ref{thm1} and a result of M.~Gissoni, T.~Grava and G.~Ruzza~\cite{GGR},
 let us give a proof of Corollary~\ref{corzdzo}.

\begin{proof}[Proof of Corollary~\ref{corzdzo}] 
The following explicit formula for the generating series for the connected LUE correlators 
$\langle \tr \, M^{\mu_1} \cdots \tr \, M^{\mu_m}\rangle_c$ 
was derived in~\cite{GGR}:
\beq\label{LUEGGRM}
\sum_{\mu_1,\dots,\mu_m\ge1} 
\frac{\langle \tr \, M^{\mu_1} \cdots \tr \, M^{\mu_m}\rangle_c}
{\lambda_1^{\mu_1+1}\cdots \lambda_m^{\mu_m+1}} = - \sum_{\sigma\in S_m/C_m}
\frac{\tr \, \prod_{j=1}^m R_{\rm GGR}(\lambda_{\sigma(j)})}{\prod_{j=1}^m (\lambda_{\sigma(j)}-\lambda_{\sigma(j+1)})} - \frac{\delta_{m,2}}{(\lambda-\mu)^2},
\eeq
where $R_{\rm GGR}(\lambda)$ is a two by two matrix given explicitly by
\beq\label{matrixresolventlueGGR}
R_{\rm GGR}(\lambda)=\begin{pmatrix} 1 & 0 \\  0 & 0 \end{pmatrix} + 
\begin{pmatrix}
\ell A_\ell(n,n+\alpha)  & B_\ell(n+1,n+\alpha+1) \\
-n(n+\alpha)B_{\ell}(n,n+\alpha) & - \ell A_\ell(n,n+\alpha) \\
\end{pmatrix}
\eeq
with 
\begin{align}
& A_\ell(n,n') = \left\{ 
\begin{array}{ll} 
n, & \ell=0, \\ 
\frac1{\ell} \sum_{j=0}^{\ell-1} (-1)^j \frac{(n-j)_\ell (n'-j)_{\ell}}{j! (\ell-1-j)!}, & \ell\geq 1, \\
\end{array} \right. \\ 
& B_{\ell}(n,n') = \sum_{j=0}^\ell (-1)^j \frac{(n-j)_\ell (n'-j)_\ell}{j!(\ell-j)!} .
\end{align}
By a straightforward calculation we observe that $R_{\rm GGR}(\lambda)$ is related to $M(\lambda)$ by 
\beq\label{TRTiM}
T  R_{\rm GGR}(\lambda) T^{-1} \equiv M(\lambda),
\eeq
with $w=n+\alpha$, where  
\beq
T := \begin{pmatrix}
1 & 0 \\
0& - \frac{n}{n+\alpha}
\end{pmatrix}.
\eeq
From~\eqref{LUEGGRM}, \eqref{TRTiM} and~\eqref{dessinsformulaM} we know that 
$\log Z_{\rm dessins}(x,x+a,{\bf p};\e)$ and $\log Z_{\rm LUE1}(x,{\bf p};a;\e)$ can only 
differ by an affine function of ${\bf p}$ (coefficients of this affine function can depend on $x$, $\epsilon$). 
The identification between the connected one-point correlators of dessins and of LUE 
can be given straightforwardly as they are derived explicitly in~\eqref{onepointlue} and in~\cite{GGR}, respectively. 
The corollary is proved by observing that the constant terms of $\log Z_{\rm dessins}(x,x+a,{\bf p};\e)$ 
and of $\log Z_{\rm LUE1}(x,{\bf p};a;\e)$ (viewed as power series of ${\bf p}$) are 
both normalized as zero (see~\eqref{deffdessins} and \eqref{lue111}, respectively).
\end{proof}

We note that the method used in~\cite{GGR} for deriving explicit formula for 
the generating series of connected LUE correlators~\eqref{LUEGGRM} is different from ours: 
In~\cite{GGR, GGR2},  
isomonodromic tau-functions of certain Riemann-Hilbert problems are used.

Based on Corollary~\ref{corzdzo}, let us translate the correction factor~\eqref{correctionfactordesssins} to
the LUE partition function, namely, define the {\it corrected LUE partition function of size~$n$} as follows:
\begin{align}
Z_n^{\rm LUE}({\bf p};\alpha;\e) := & \, \e^{-\frac1{12} + n(n+\alpha)} (2 \pi)^{-n}
\frac{G(n+1) G(n+\alpha+1)}{G(\alpha+1)} Z_n^{\rm LUE1}({\bf p};\alpha;\e), \label{defluecorrect}\\
= & \, 
\frac{G(n+1) \e^{-\frac1{12}}}{ \pi^{\frac{n(n+1)}{2}} 2^n }
\int_{\mathcal{H}^+_n} (\det{M})^\alpha e^{-\frac1\e {\rm tr} \, V(M;{\bf p})} dM. \label{defznlue} 
\end{align}
For the special case when $\alpha=-1/2$, 
the corrected LUE partition function was also defined in~\cite{GGR}.
It follows from~\eqref{defznlue} that (cf.~\eqref{dilatonlue1})
\beq
\sum_{j\geq1}  \tilde p_j  \frac{\p Z_n^{\rm LUE}({\bf p};\alpha;\e)}{\p p_j}
+ \e \frac{\p Z_n^{\rm LUE}({\bf p};\alpha;\e)}{\p \e}
+\frac1{12} Z_n^{\rm LUE}({\bf p};\alpha;\e)
= 0 .
\eeq

From the above translation and 
Corollary~\ref{corzdzo}, we have 
\beq\label{LUEpartitionfunctionequalZ}
Z(x,{\bf p}; a; \e) = Z_{x/\e}^{\rm LUE}({\bf p};a/\e;\e).
\eeq
So we also call the corrected dessin partition function $Z(x,{\bf p}; a; \e)$ {\it the corrected LUE partition function}
and call the corrected dessin free energy $\F(x, {\bf p}; a;\e)$ the {\it corrected LUE free energy}.

More consequences of Corollary~\ref{corzdzo} will be given in \S~\ref{newsection5}. 

It is convenient to denote 
\beq
\F(x, {\bf p}; a;\e)=:\e^{2g-2} \F_g(x,{\bf p};a).
\eeq
We call $\F_g(x,{\bf p};a)=\F_g$ the genus~$g$ part of the corrected dessin/LUE free energy, 
or for short the the genus~$g$ corrected dessin/LUE free energy.
In the next section, we will provide an application of Theorem~\ref{thm2} on the computation of~$\F_g$.

\section{Calculating dessin correlators via $\mathbb{P}^1$ Frobenius manifold}\label{newsection4}

In this section, following~\cite{Du0,Dubrovin,DY1,DY2,DZ-norm, DZ} we outline the Dubrovin--Zhang
method for calculating the corrected dessin free energy~$\F$. This method was briefly outlined for
the computation of the GUE free energy in~\cite{Dubrovin,DY1}, but here we will give more details.

Start with the Frobenius manifold~$M$ associated with the $\bP^1$-topological sigma model with the invariant flat metric~$\eta$~\cite{Du1}.
Take ${\bf v}=(v^1,v^2)$ a system of flat coordinates associated with~$\eta$, normalized by requiring that
$\p/\p v^1$ is the unit vector field and that $\eta_{\alpha\beta}=\delta_{\alpha+\beta,3}$.
We will sometimes write $v=v^1, u=v^2$ for simplification of notations.
Denote $(\eta^{\alpha\beta})=(\eta_{\alpha\beta})^{-1}$. (Avoid from confusing with the indeterminates $u,v$ in~\eqref{defdessinscorr}.)
Here and below, free Greek indices take the integer values $1,2$, the Einstein summation convention
is used for repeated Greek indices with one-up and one-down, and we use
$\eta_{\alpha\beta}$ and
 $\eta^{\alpha\beta}$ to lower and raise the Greek indices, respectively.
The Frobenius potential~\cite{Du0} of~$M$ is given by~\eqref{todafrob}
with the Euler vector field~$E$ on~$M$ given by~\eqref{todaeuler}.
In~\cite{Du0,DZ-norm,DZ} Dubrovin and Zhang proposed
a way of computing Gromov--Witten (GW) invariants of $\mathbb{P}^1$
via geometry of this Frobenius manifold and proved that the
partition function of GW invariants of $\mathbb{P}^1$
is a particular tau-function of a certain extension of the Toda lattice hierarchy (see
also~\cite{CDZ, Du0, EY, Getzler, OP, Zhang}). In other words, the
Dubrovin--Zhang hierarchy for the Frobenius manifold of $\mathbb{P}^1$
is normal Miura equivalent~\cite{DZ-norm} to the extended Toda hierarchy.
In particular, the corresponding principal hierarchy~\cite{Du0} coincides with the dispersionless
extended Toda hierarchy, which can be written as follows:
\begin{align}\label{phdef}
\frac{\p v^\alpha}{\p T^{\beta,q}} = \eta^{\alpha\gamma}\p_{x} \biggl(\frac{\p \theta_{\beta,q+1}({\bf v})}{\p v^\gamma}\biggr), \quad q\geq0.
\end{align}
Here $\theta_{\alpha,p}({\bf v})$, $p\ge0$, are certain smooth functions of~${\bf v}$
which can be defined via the generating series~\cite{DZ}
\begin{align}
&\theta_1({\bf v};z):=\sum_{p\geq0} \theta_{1,p}({\bf v}) z^p
= -2 e^{z v} \sum_{m\geq0} \Bigl(\gamma-\frac12 u+\psi(m+1)\Bigr) e^{mu} \frac{z^{2m}}{m!^2}, \label{theta1z}\\
&\theta_2({\bf v};z):=\sum_{p\geq0} \theta_{2,p}({\bf v}) z^p = z^{-1} \biggl(\sum_{m\geq0} e^{m u+z v} \frac{z^{2m}}{m!^2}-1\biggr), \label{theta2z}
\end{align}
where $\gamma$ is the Euler--Mascheroni constant and $\psi(z)$ is the digamma function.
The $\p_{T^{1,0}}$-flow reads:
\beq
\frac{\p v^\alpha}{\p T^{1,0}} = \frac{\p v^\alpha}{\p x}.
\eeq
Therefore, we identify $T^{1,0}$ with~$x$.
The $\p_{T^{2,p}}$-flows ($p\geq0$), also known as the {\it stationary flows},
are identified with the dispersionless Toda lattice hierarchy under the rescalings
\beq
(p+1)! \, \p_{T^{2,p}} = \p_{t_p}, \quad p\geq0.
\eeq
For the purpose of computing the dessin correlators we mostly look at the $\p_{T^{2,p}}$-flows ($p\geq0$), and 
for simplification of notations we denote ${\bf T}=(T^{2,p})_{p\geq0}$.

Our temporary goal is to find the power-series-in-${\bf T}$ solution to the $\p_{T^{2,p}}$-flows in the principal
hierarchy~\eqref{phdef} with
the following initial condition:
\beq\label{initialdata}
v(x,{{\bf T}={\bf 0}};a) = 2x+a, \quad e^{u(x,{{\bf T}={\bf 0}};a)} = x(x+a).
\eeq
Here $a$ is a parameter.
We know that power-series-in-${\bf T}$
solution $v^\alpha(x,{\bf T};a)$ to this initial value problem \eqref{phdef}, \eqref{initialdata} exists and is unique,
as~\eqref{phdef} is an {\it integrable system of evolutionary PDEs}.
This unique solution belongs to the class of {\it monotone solutions} and could be obtained by the
hodograph method~\cite{Du1,DGKM,DZ,Pavlov,Tsarev}. Indeed,
the monotonicity can be seen, via 
a direct computation of the Riemann invariants
\beq
R_1({\bf v})=v+2e^{u/2}, \quad R_2({\bf v})=v-2e^{u/2},
\eeq
 by the fact that $(R_i)_x$ ($i=1,2$) do not vanish for this unique solution at generic~$x=x_0$ and at ${\bf T}={\bf 0}$.
The hodograph method then shows that this unique solution coincides with the unique power-series-in-${\bf T}$ solution
to the following equation
\beq\label{displessEL}
x\frac{\p \theta_{1,0}({\bf v})}{\p v^\alpha} +\sum_{q\geq0} \widetilde{T}^{2,q} \frac{\p \theta_{2,q}({\bf v})}{\p v^\alpha} = \frac{\p \phi({\bf v};a)}{\p v^\alpha},
\eeq
where $\widetilde{T}^{2,q} =T^{2,q}-\delta_{q,0}$,
and $\phi({\bf v};a)$ is defined by
\beq\label{defphi}
\phi({\bf v};a):=-\frac{a}2u-\frac{a}2 \, {\rm log}\biggl(\frac{v+\sqrt{\Theta({\bf v})}}{v-\sqrt{\Theta({\bf v})}}\biggr) ,\quad \Theta({\bf v}):=v^2-4 e^{u}.
\eeq

Following~\cite{Du1},
define a family of smooth functions $\Omega_{\alpha,p;\beta,q}^{[0]}({\bf v})$, $p,q\geq0$, via the generating series:
\beq\label{defiOmegapq}
\sum_{p,q\geq0} \Omega_{\alpha,p;\beta,q}^{[0]}({\bf v}) z_1^p z_2^q =
\frac{\frac{\p \theta_\alpha({\bf v};z_1)}{\p v^\rho}\eta^{\rho\sigma}\frac{\p \theta_\beta({\bf v};z_2)}{\p v^\sigma}-\eta_{\alpha\beta}}{z_1+z_2},
\eeq
where $\theta_\alpha({\bf v};z)$ are given in~\eqref{theta1z}--\eqref{theta2z}.
These smooth functions are called the {\it genus zero two-point correlation functions}
of the $\mathbb{P}^1$ Frobenius manifold and
have the properties:
\begin{align}
& \Omega_{\alpha,p;\beta,q}^{[0]}({\bf v})=\Omega_{\beta,q;\alpha,p}^{[0]}({\bf v}), \quad
\p_{T^{\gamma,s}}(\Omega_{\alpha,p;\beta,q}^{[0]}({\bf v}))=\p_{T^{\beta,q}}(\Omega_{\alpha,p;\gamma,s}^{[0]}({\bf v})),
\quad \forall\,p,q,s\geq0, \label{omegapro1}\\
& \theta_{\alpha,p}({\bf v})=\Omega_{\alpha,p;1,0}({\bf v}), \quad \forall\, p\geq0, \label{omegapro2} \\
& E(\theta_{2,p}({\bf v}))=(p+1) \theta_{2,p}({\bf v}), \quad
E(\Omega_{2,p;2,q}({\bf v}))=(p+q+2) \Omega_{2,p;2,q}({\bf v}), \quad \forall\, p,q\geq0. \label{homogEtheta2pOm22}
\end{align}
Let us also consider a Hamiltonian PDE
\beq\label{yflow}
\frac{\p v^\alpha}{\p y} = \eta^{\alpha\gamma}\p_{x} \biggl(\frac{\p h({\bf v};a)}{\p v^\gamma}\biggr),
\eeq
with Hamiltonian density~$h=h({\bf v};a)$ given by
\beq
h({\bf v};a) := -\frac{a}2 uv+ a \sqrt{\Theta({\bf v})} - \frac{a}2 \, v \, {\rm log} \Biggl(\frac{v+\sqrt{\Theta({\bf v})}}{v-\sqrt{\Theta({\bf v})}}\Biggr).
\eeq
It is easy to verify that
\beq
h_{uu} = e^{u}h_{vv}.
\eeq
So the $\p_y$-flow commutes~\cite{DZ-norm, DZ} with the principal hierarchy~\eqref{phdef}. We now define
\begin{align}
&\omega({\bf v};a) = 
a^2 \log\Biggl(\frac{v+\sqrt{\Theta({\bf v})}}{2\sqrt{\Theta({\bf v})}}\Biggr), \label{defiomega}\\
&g_{p}({\bf v};a) = - \frac{a}2 \, \theta_{2,p}({\bf v}) + \frac{a}2 \, \frac{\sqrt{\Theta({\bf v})}}{(p+1)!} \sum_{k=0}^{p/2} \binom{p}{2k}\binom{2k}{k} e^{k u} v^{p-2k}. \label{defigp}
\end{align}
It is straightforward to verify that these functions including~$\phi({\bf v};a)$ satisfy the relations:
\begin{align}
& \p_x(\omega({\bf v};a)) = \p_y(\phi({\bf v};a)), \quad \p_{T^{2,p}}(\omega({\bf v};a)) = \p_y(g_p({\bf v};a)), \quad \forall\,p\ge0,\\
& \p_x(g_p({\bf v};a)) = \p_y(\theta_{2,p}({\bf v};a)), \quad \p_{T{2,q}}(g_p({\bf v};a)) = \p_y(\Omega_{2,p;2,q}({\bf v})), \quad \forall\, p,q\geq0,\\
& E(\phi({\bf v};a))=-a, \quad E(g_p({\bf v};a))=(p+1) g_p({\bf v};a), \quad E(\omega({\bf v};a))=0, \quad \forall\,p\geq0.\label{homogEphifmgpomega}
\end{align}

Following~\cite{Du0,DZ-norm,DZ}, introduce a power series $\widetilde{\F}_0=\widetilde{\F}_0(x,{\bf T};a)$ given by
\begin{align}
& \widetilde{\F}_0(x,{\bf T};a)
 :=  \frac12\sum_{p,q\geq0} \widetilde T^{2,p} \widetilde T^{2,q} \Omega_{2,p;2,q}^{[0]}({\bf v}(x,{\bf T};a))
+ \sum_{p\geq0} x \widetilde T^{2,p}\theta_{2,p}({\bf v}(x,{\bf T};a)) \label{dessins0dub} \\
& + \frac12 x^2 u(x,{\bf T};a) + \frac12 \omega({\bf v}(x,{\bf T};a);a) - x \phi({\bf v}(x,{\bf T};a);a)
 - \sum_{p\geq0} \widetilde T^{2,p} g_{p}({\bf v}(x,{\bf T};a);a), \nn
\end{align}
where ${\bf v}(x,{\bf T};a)=(v(x,{\bf T};a),u(x,{\bf T};a))$ denotes the power-series-in-${\bf T}$ solution to the
$\p_{T^{2,p}}$-flows ($p\geq0$) in the principal hierarchy~\eqref{phdef} specified by~\eqref{initialdata}. We are to show that
$\widetilde{\F}_0(x,{\bf T};a)$ coincides with the genus zero part $\F_0(x,{\bf p};a)$ of the corrected dessin/LUE free energy. First of all,
the following relations are known:
\begin{align}
& \frac{\p^2 \widetilde{\F}_0(x,{\bf T};a)}{\p T^{2,p} \p T^{2,q}} 
= \Omega_{2,p;2,q}^{[0]}({\bf v}(x,{\bf T};a)), \quad \forall \, p,q\geq0, \label{defgenus0tau1}\\
& \frac{\p^2 \widetilde{\F}_0(x,{\bf T};a)}{\p T^{1,0} \p T^{1,0}} = u(x,{\bf T};a), \label{defgenus0tau2}\\
& \frac{\p^2 \widetilde{\F}_0(x,{\bf T};a)}{\p T^{1,0} \p T^{2,p}} = \Omega_{1,0;2,p}^{[0]}({\bf v}(x,{\bf T};a)), \quad \forall \, p\geq0  \label{defgenus0tau3}
\end{align}
(cf.~\cite{Du0, DZ-norm, DZ}). It follows from these relations that~$e^{\e^{-2} \widetilde{\F}_0(x,{\bf T};a)}$ satisfies the genus zero part of the
defining equations~\eqref{deftau1}--\eqref{deftau3} for the tau-function of the solution, in other words, 
$e^{\e^{-2} \widetilde{\F}_0(x,{\bf T};a)}$ is the tau-function of the solution ${\bf v}(x,{\bf T};a)$ to the dispersionless Toda lattice hierarchy (cf.~\eqref{phdef}).
Here, we recall that 
\beq\label{Tpt}
T^{2,q} = q! \, p_{q+1} = (q+1)! \, t_q,  \quad q\geq0.
\eeq

We note that the above relations \eqref{defgenus0tau1}--\eqref{defgenus0tau3}
can uniquely determine $\widetilde{\F}_0$ up to a linear function of $x, {\bf T}$ (recall that $x=T^{1,0}$).
Let us continue to show that the power series~$\widetilde{\F}_0$ defined by~\eqref{dessins0dub}
also satisfies the genus zero part of the $k=0$ case of equations~\eqref{virazfull}. To this end,
introduce the derivation
\beq
\mathcal{E}=\sum_{p\ge0} (p+1) \widetilde T^{2,p} \frac{\p }{\p T^{2,p}}.
\eeq
Using the first relation in~\eqref{homogEtheta2pOm22} and using the first relation in~\eqref{homogEphifmgpomega},
as well as using~\eqref{displessEL}, we find that
\beq
\sum_{p\geq0} (p+1)\widetilde T^{2,p} \theta_{2,p}({\bf v}(x,{\bf T};a))=\sum_{p\geq0} \widetilde T^{2,p}
E(\theta_{\alpha,p}({\bf v}))(x,{\bf T};a)  = -2x-a.
\eeq
Differentiating this equation with respect to $T^{1,0}$ and $T^{2,0}$, respectively, we get
\beq\label{homogMEuv}
\mathcal{E}(u(x,{\bf T};a))=-2, \quad \mathcal{E}(v(x,{\bf T};a))= -v(x,{\bf T};a)\,.
\eeq
By using the definitions \eqref{defiOmegapq}, \eqref{defiomega}, \eqref{defigp},
as well as using~\eqref{homogEphifmgpomega},
we then obtain
\begin{align}
& \mathcal{E}(\Omega^{[0]}_{2,p;2,q}({\bf v}(x,{\bf T};a))) = 
-(p+q+2) \Omega^{[0]}_{2,p;2,q}({\bf v}(x,{\bf T};a)), \quad \forall\, p,q\ge0,\label{homogOmega} \\
& \mathcal{E}(g_p({\bf v}(x,{\bf T};a);a))= -(p+1) g_p({\bf v}(x,{\bf T};a);a), \quad \forall\, p\geq0, \label{homogEgp}\\
& \mathcal{E}(\phi({\bf v}(x,{\bf T};a);a))= a, \quad \mathcal{E}(\omega({\bf v}(x,{\bf T};a);a))=0. \label{homogEphoom}
\end{align}
Hence by using \eqref{homogMEuv}--\eqref{homogEphoom} and~\eqref{dessins0dub} we find that
\beq\label{genuszeroEF0}
 \sum_{p\geq0} (p+1) \widetilde T^{2,p}  \frac{\p \widetilde{\F}_0(x,{\bf T};a)}{\p T^{2,p}} + x(x+a)  = 0.
\eeq
We conclude from \eqref{defgenus0tau1}--\eqref{defgenus0tau3} and~\eqref{genuszeroEF0}
that $\widetilde{\F}_0(x,{\bf T};a)$ could differ from~$\F_0(x,{\bf p};a)$ only possibly by adding a function
of~$x,a$ (with at most linear dependence in~$x$).
The final verification is a straightforward computation by taking ${\bf T}={\bf 0}$ in~\eqref{dessins0dub}
and by comparing the resulting expression $\widetilde{\F}_0|_{{\bf T}={\bf 0}}$ with
\beq
\F_0|_{{\bf p}={\bf 0}}=\frac{x^2}2 \log x + \frac{(x+a)^2}{2}  \log (x+a)
- \frac{a^2}2 \log a-\frac{3}{2} x(x+a).
\eeq
In a similar way, we can also verify that 
\beq\label{dilatonF0}
 \sum_{p\geq0} \widetilde T^{2,p} \frac{\p \widetilde{\F}_0(x,{\bf T};a)}{\p T^{2,p}} + x \frac{\p \widetilde{\F}_0(x,{\bf T};a)}{\p x} + 
 a \frac{\p \widetilde{\F}_0(x,{\bf T};a)}{\p a} = 2 \widetilde{\F}_0(x,{\bf T};a).
\eeq

Let us proceed with the higher genera. Recall that, in~\cite{DZ-norm,DZ}
the {\it quasi-triviality} for the extended Toda hierarchy~\cite{CDZ} is obtained.
It has the following form
\begin{align}
& \widehat V = \frac{\Lambda-1}{\e\p_x} (v) + (\Lambda-1) \circ \p_{T^{2,0}}  \Biggl(\sum_{g\geq1} \e^{2g-1} F^M_g\biggl({\bf v}, \frac{\p {\bf v}}{\p x},\dots,\frac{\p^{3g-2} {\bf v}}{\p x^{3g-2}}\biggr)\Biggr), \label{qm1}\\
& \widehat W = \frac{(\Lambda+\Lambda^{-1}-2)} {\e^2 \p_x^2} (u) + (\Lambda+\Lambda^{-1}-2) \Biggl(
\sum_{g\geq1} \e^{2g-2} F^M_g\biggl({\bf v}, \frac{\p {\bf v}}{\p x},\dots,\frac{\p^{3g-2}{\bf v}}{\p x^{3g-2}}\biggr)\Biggr). \label{qm2}
\end{align}
Here, $\Lambda=e^{\e \p_x}$, and $F^M_g=F^M_g({\bf v}_0={\bf v}, {\bf v}_1,\dots, {\bf v}_{3g-2})$ ($g\geq1$), with ${\bf v}_k=(v_k,u_k)$,
are absolute functions satisfying
\begin{align}
& \sum_{k\geq1} k u_k \frac{\p F^M_g}{\p u_k} + \sum_{k\geq1} k v_k \frac{\p F^M_g}{\p v_k} 
= (2g-2) F_g^M + \frac{1}{12} \delta_{g,1},\label{homogenfg}
\end{align}
and they have the form:
for $g\ge2$, $F^M_g$ are polynomials of ${\bf v}_2,{\bf v}_3,\dots,{\bf v}_{3g-2}$ and have
rational dependence in ${\bf v}_0,{\bf v}_1$; for $g=1$,
\begin{align}\label{f1gw}
F^M_1= \frac1{24} \log \bigl((v_1)^2-e^{u} (u_1)^2\bigr) - \frac1{24} u;
\end{align}
for $g\geq2$, the rational functions $F^M_g$ are uniquely determined by the {\it loop equation}~\cite{DZ-norm, DZ}:
\begin{align}
\label{loop}
&\quad\ \sum_{r\geq 0}\biggl({\p \Delta F^M\over \p v_r}
\biggl( \frac{v-\lambda} D \biggr)_r - 2 {\p \Delta F^M \over \p u_r} \biggl(\frac1 D\biggr)_r\biggr)\\
&\quad +\sum_{r\geq 1} \sum_{k=1}^r \binom{r}{k}
\biggl({1\over \sqrt{D}}\biggr)_{k-1} \biggl[ {\p \Delta F^M\over \p v_r}
\biggl({v-\lambda\over \sqrt{D}}\biggr)_{r-k+1}-2 {\p \Delta F^M \over \p u_r} \biggl({1\over \sqrt{D}}\biggr)_{r-k+1}\biggr]
\nn\\
&=D^{-3} e^{u} \bigl(4e^{u}+(v-\lambda)^2\bigr)\nn\\
&\quad - \epsilon^2 \sum_{k,l\geq0} \Biggl[ \frac14 S\bigl(\Delta F^M,v_k,v_l\bigr)
\biggl({v-\lambda\over \sqrt{D}}\biggr)_{k+1} \biggl({v-\lambda\over
\sqrt{D}}\biggr)_{l+1}\nn\\
&\quad -S\bigl(\Delta F^M,v_k,u_l\bigr) \biggl({v-\lambda\over
\sqrt{D}}\biggr)_{k+1} \biggl(\frac1{\sqrt{D}} \biggr)_{l+1}+ S(\Delta F^M,u_k,u_l) \,
\biggl({1\over \sqrt{D}}\biggr)_{k+1} \biggl(\frac1{\sqrt{D}}\biggr)_{l+1}\Biggr]
\nn\\
&\quad -\frac{\epsilon^2} 2 \sum_{k\geq0}  \biggl[ {\p \Delta F^M \over \p v_k}
  \p^{k+1} \biggl({e^{u} {4 e^{u}(v-\lambda) u_1 - ((v-\lambda)^2 + 4 e^{u}) v_1 \over D^3} } \biggr) \nn\\
&\qquad\qquad\quad + {\p \Delta F^M \over \p u_k} \p^{k+1}\biggl(e^{u} {4 (v-\lambda) \, v_1 - ((v-\lambda)^2 + 4 e^{u}) u_1\over D^3}\biggr)\biggr],\nn
\end{align}
where $\Delta F^M:= \sum_{g\geq 1} \epsilon^{2g} F^M_g$,
$D= (v-\lambda)^2 - 4 e^{u}$,
$S(f,a,b):=
{\p^2 f\over \p a \p b}+
{\p f \over \p a}
{\p f \over \p b}$,
and $f_r$ stands for $\p^r(f)$ with 
$$\p:=\sum_{k\geq0} u_{k+1} \frac{\p f}{\p u_k} + \sum_{k\geq0} v_{k+1} \frac{\p f}{\p v_k}.$$
Indeed, according to~\cite{DZ-norm, DZ}, solution $\Delta F^M$ to~\eqref{loop} exists and is unique up to
a sequence of additive constants for $F_g$ ($g\geq1$), and these constants can be further fixed by~\eqref{homogenfg} for $g\ge2$.
For example, the explicit expression for~$F^M_2$ can be found in~\cite{DY1, DZ}. The functions $F^M_g$, $g\geq 2$, also 
satisfy the following equation
\beq
 2 \frac{\p F^M_g}{\p u} + \sum_{k\geq1} v_k \frac{\p F^M_g}{\p v_k} = 0,\label{homogenfg22022}
\eeq
which can be easily deduced from the $L_0$-constraint for the GW invariants of~$\mathbb{P}^1$~\cite{DZ}.

It is shown in~\cite{DZ} that the quasi-Miura map~\eqref{qm1}--\eqref{qm2} 
transforms a monotone solution of the principal hierarchy~\eqref{phdef} to a 
solution of the extended Toda hierarchy (see the Theorem~1.1 of~\cite{DZ}).
As we mentioned above, the particular solution $(v(x,{\bf T};a), u(x, {\bf T};a))$ 
of interest to the $\p_{T^{2,p}}$-flows ($p\geq0$) in the principal hierarchy~\eqref{phdef} specified 
by the initial data~\eqref{initialdata} is monotone, and therefore we conclude that, 
$(\widehat V(x,{\bf T};\e), \widehat U(x,{\bf T};\e))$ obtained by replacing $v_k, u_k$ ($k\geq0$) in~\eqref{qm1}--\eqref{qm2} 
by $\partial_x^k(v(x,{\bf T};a)), \partial_x^k(u(x, {\bf T};a))$, i.e.
\begin{align}
&\widehat V(x,{\bf T};\e) := \widehat V|_{v_k\mapsto\partial_x^k(v(x,{\bf T};a)), u_k\mapsto\partial_x^k(u(x,{\bf T};a)), k\geq0}, \\
&\widehat U(x,{\bf T};\e) := 
\widehat U|_{v_k\mapsto\partial_x^k(v(x,{\bf T};a)), u_k\mapsto\partial_x^k(u(x,{\bf T};a)), k\geq0}
\end{align}
is a particular solution to the Toda lattice hierarchy.  
Since we have shown that $e^{\e^{-2}\F_0(x,{\bf p};a)}=e^{\e^{-2}\widetilde{\F}_0(x,{\bf T};a)}$ is the tau-function of 
the solution $(v(x,{\bf T};a), u(x, {\bf T};a))$ 
to the dispersionless Toda lattice hierarchy, using  
the Theorem~1.1 of~\cite{DZ} we find that
\beq
\tau(x,{\bf T};a; \e):=\exp\Biggl({\e^{-2}\widetilde{\F}_0(x,{\bf T};a)} + \sum_{g\geq1} \e^{2g-2}F^M_g\big|_{v_k\mapsto\partial_x^k(v(x,{\bf T};a)), u_k\mapsto\partial_x^k(u(x,{\bf T};a)), k\geq0} \Biggr)
\eeq
is the tau-function of the solution $\bigl(\widehat V(x,{\bf T};\e), \widehat U(x,{\bf T};\e)\bigr)$ to the Toda lattice hierarchy. 
By using \eqref{dilatonF0}, \eqref{genuszeroEF0}, \eqref{homogMEuv}, \eqref{homogenfg}, \eqref{homogenfg22022}, 
one can also verify that this tau-function $\tau=\tau(x,{\bf T};a; \e)$ satisfies the following two linear equations:
\begin{align}
&\sum_{q\geq0}  \widetilde{T}^{2,q}  \frac{\p \tau}{\p T^{2,q}} + \e  \frac{\p \tau}{\p \e} + x  \frac{\p \tau}{\p x} +
a \frac{\p \tau}{\p a}  + \frac1{12}\tau = 0, \\
&\sum_{q\geq0} (q+1) \widetilde{T}^{2,q}  \frac{\p \tau}{\p T^{2,q}} + \frac{x(x+a)}{\e^2} \tau = 0,
\end{align}
which agree with~\eqref{dilatonforZ121} and the $k=0$ case of~\eqref{virazfull}, respectively. 
Hence we arrive at the following theorem.
\begin{theorem}\label{thm3}
The genus zero corrected dessin/LUE free energy $\F_0(x,{\bf p};a)$ is equal to $\widetilde{\F}_0(x,{\bf T};a)$ given by~\eqref{dessins0dub}.
The genus~$g$ ($g\geq1$) corrected dessin/LUE free energy satisfies
\beq\label{fgfmgequal}
\F_g(x,{\bf p};a) = F^M_g\biggl({\bf v}(x,{\bf T};a), \frac{\p {\bf v}(x,{\bf T};a)}{\p x},\dots,\frac{\p^{3g-2} {\bf v}(x,{\bf T};a)}{\p x^{3g-2}}\biggr) 
+ \frac{c_g}{a^{2g-2}} + \frac{b_g x}{a^{2g-1}}
\eeq
for some $c_g, b_g\in \CC$. Here, $T^{2,q} = q! \, p_{q+1}$,
and
${\bf v}(x,{\bf T};a)=(v(x,{\bf T};a),u(x,{\bf T};a))$ denotes the unique power-series-in-${\bf T}$ solution to
the principal hierarchy~\eqref{phdef} specified by~\eqref{initialdata}.
\end{theorem}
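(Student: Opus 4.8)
The plan is to treat genus zero and higher genera separately; most of the analytic input has been assembled in the preceding discussion, so what remains is mainly to organize it.

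\emph{Genus zero.} First I would observe that the relations \eqref{defgenus0tau1}--\eqref{defgenus0tau3} say precisely that $e^{\e^{-2}\widetilde\F_0}$ satisfies the genus-zero parts of the tau-structure equations \eqref{deftau1}--\eqref{deftau3} for the dispersionless Toda solution ${\bf v}(x,{\bf T};a)$ of \eqref{phdef} with initial data \eqref{initialdata}; these fix $\widetilde\F_0$ up to an affine function of $(x,{\bf T})$. The genus-zero $L_0$-constraint \eqref{genuszeroEF0} and the dilaton identity \eqref{dilatonF0} then cut the ambiguity down to a function of $(x,a)$ that is at most linear in $x$. Since, by Theorem~\ref{thm2}, $Z$ is a Toda tau-function whose initial data \eqref{initialvaluetodadessins} has dispersionless limit \eqref{initialdata}, its genus-zero part $\F_0$ obeys the same equations; comparing the two at ${\bf T}={\bf 0}$ against the explicit formula for $\F_0|_{{\bf p}={\bf 0}}$ recorded above fixes the remaining constant and yields $\widetilde\F_0=\F_0$.

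\emph{Higher genera.} By the Dubrovin--Zhang quasi-triviality \eqref{qm1}--\eqref{qm2}, the monotone solution ${\bf v}(x,{\bf T};a)$ of the principal hierarchy \eqref{phdef} is carried to a solution $(\widehat V,\widehat W)(x,{\bf T};\e)$ of the Toda lattice hierarchy, and by Theorem~1.1 of \cite{DZ} the function
\[
\tau(x,{\bf T};a;\e):=\exp\Bigl(\e^{-2}\widetilde\F_0+\textstyle\sum_{g\geq1}\e^{2g-2}F^M_g\big|_{v_k\mapsto\p_x^k v,\,u_k\mapsto\p_x^k u}\Bigr)
\]
is its tau-function. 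A direct computation shows that at ${\bf T}={\bf 0}$ the quasi-Miura dressing reproduces the dessin/LUE initial data \eqref{initialvaluetodadessins}. Since a power-series-in-${\bf t}$ solution of \eqref{TLH} is determined by its initial data, $\tau$ and $Z$ are tau-functions of one and the same Toda solution, so $\delta:=\log\tau-\log Z$ is affine in $x$ and linear in the times $t_i=p_{i+1}/(i+1)$, with coefficients possibly depending on $a,\e$.

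\emph{Fixing the remaining freedom.} Both $\log\tau$ and $\log Z$ satisfy the dilaton equation \eqref{dilatonforZ121} and the $k=0$ case of \eqref{virazfull}; for $\tau$ these are the two linear equations verified just above out of the homogeneities \eqref{homogenfg}, \eqref{homogenfg22022} of the universal functions $F^M_g$ together with \eqref{homogMEuv}, \eqref{genuszeroEF0}, \eqref{dilatonF0}. Subtracting, $\delta$ satisfies the homogeneous counterparts $\sum_{j\ge1}j\,\tilde p_j\,\p_{p_j}\delta=0$ and $\sum_{j\ge1}\tilde p_j\,\p_{p_j}\delta+\e\,\p_\e\delta+x\,\p_x\delta+a\,\p_a\delta=0$. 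Writing $\delta=c(a,\e)+b(a,\e)\,x+\sum_{j\ge1}d_j(a,\e)\,p_j$, the first equation forces every $d_j=0$; then, expanding $c=\sum_g\e^{2g-2}c_g(a)$ and $b=\sum_g\e^{2g-2}b_g(a)$, the second gives $a\,c_g'+(2g-2)c_g=0$ and $a\,b_g'+(2g-1)b_g=0$, hence $c_g=\mathrm{const}\cdot a^{2-2g}$ and $b_g=\mathrm{const}\cdot a^{1-2g}$, while the genus-zero coefficients vanish by the first step. Reading off the genus-$g$ component gives exactly \eqref{fgfmgequal}. The step I expect to cost real work is the middle one: verifying that the quasi-Miura dressing of this particular monotone solution lands precisely on the Toda solution with initial data \eqref{initialvaluetodadessins}, so that $\tau$ and $Z$ genuinely share a solution, and in tandem that the constructed $\tau$ obeys the dilaton and $L_0$ constraints.
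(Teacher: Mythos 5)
Your proposal follows the paper's own proof essentially step for step: genus zero via the tau-structure relations \eqref{defgenus0tau1}--\eqref{defgenus0tau3} together with \eqref{genuszeroEF0}, \eqref{dilatonF0} and the comparison at ${\bf T}={\bf 0}$, and higher genera via quasi-triviality, Theorem~1.1 of \cite{DZ}, and the dilaton/$L_0$ constraints to reduce the residual ambiguity to $c_g a^{2-2g}+b_g\,x\,a^{1-2g}$. The explicit ODE argument you give in the last step, and the verification (which you rightly flag as the real work) that the quasi-Miura dressing of the monotone solution reproduces the initial data \eqref{initialvaluetodadessins}, are precisely the details the paper leaves implicit, so your write-up is, if anything, slightly more complete than the original.
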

\noindent We verified that for $g=1,2$, 
\beq\label{15141}
c_g=\biggl(\zeta'(-1)-\frac1{24} \log(-1)\biggr)\delta_{g,1}, \quad b_g=0.
\eeq
We expect that the equalities in~\eqref{15141} hold for all $g\geq1$.

Similar statements to Theorems~\ref{thm2}, \ref{thm3} can be given for the case of the
GUE partition function~\cite{Dubrovin, DY1}, i.e., the case with initial value $V(x,{\bf 0};\e) = 0, W(x, {\bf 0};\e) = x$
for the Toda lattice hierarchy. 
The following is a table of the initial values\footnote{We would like to 
thank the authors of~\cite{GGR} for providing the initial value for the JUE case; cf.~\cite{Wilson}.}
 for various tau-functions of the Toda lattice hierarchy (cf.~\cite{DY1, DYZ2, GGR, GGR2, Y, Zhou2}) appearing in matrix models:
\begin{center}
\begin{tabular}{|c|c|c|}
\hline
& $V$ & $W$ \\
\hline
LUE/dessins & $2x+a+\e$ & $x(x+a)$ \\
\hline
JUE & $\frac12+\frac{b^2-a^2}{2(2x+a+b)(2x+a+b+2\e)}$ & $\frac{x(x+a)(x+b)(x+a+b)}{(2x+a+b-\e)(2x+a+b)^2(2x+a+b+\e)}$ \\
\hline
GUE & $0$ & $x$ \\
\hline
mEven GUE & $2x+\frac{\e}2 = x_{\rm mE}+\e$ & $x(x-\frac{\e}2)= \frac14 x_{\rm mE}^2 - \frac{\e^2}{16}$ \\
\hline
$\bP^1$ & $x+\frac{\e}2$  & $1$ \\
\hline
\end{tabular}
\end{center}
where mEven GUE stands for the modified even GUE~\cite{DLYZ, DY2, DY3},
and JUE stands for the Jacobi unitary ensemble.
In each case, let us define the {\it normalized partition function} to be the one that equals~1 when 
all the couplings are taken to be~0.
For the GUE case, we recall from~\cite{Dubrovin, DY1} that a correction factor 
\beq
\e^{-\frac1{12} + \frac{n^2}2} (2\pi)^{-\frac{n}2} G(n+1) 
\eeq
is introduced to the normalized GUE partition function so that the corrected GUE partition function is a tau-function for the Toda lattice hierarchy.
For LUE, the correction factor to the normalized LUE partition function can be read off from~\eqref{defluecorrect}, that is
\beq
\e^{-\frac1{12} + n(n+\alpha)} (2 \pi)^{-n}
\frac{G(n+1) G(n+\alpha+1)}{G(\alpha+1)}.
\eeq 
For JUE, to make the normalized JUE partition function~\cite{GGR2} a tau-function for the 
Toda lattice hierarchy, we need to add the following correction factor (using~\eqref{deftau3} and the second line of the above table)
\begin{align}\label{juecorrection}
\e^{-\frac1{12} + 2n(n+\alpha+\beta)} (2 \pi)^{-2n}\frac{G(n+1) G(n+\alpha+1) G(n+\beta+1) G(n+\alpha+\beta+1)}{G(\alpha+1)G(\beta+1)G(\alpha+\beta+1)}.
\end{align} 
We expect that the higher genus parts of the corrected JUE partition function satisfy Dubrovin--Zhang's ansatz, and we 
have made verifications up to genus~2.

In genus zero we have the following table:
\begin{center}
\begin{tabular}{|c|c|c|c|}
\hline
& $v$ & $w=e^u$ & Equations \\
\hline
LUE/dessins & $2x+a$ & $x(x+a)$ & $w=\frac{1}{4}(v^2-a^2)$ \\
\hline
JUE & $\frac12+\frac{b^2-a^2}{2(2x+a+b)^2}$ & $\frac{x(x+a)(x+b)(x+a+b)}{(2x+a+b)^4}$
& $w=\frac{1}{4}v^2-\frac{b^2}{2(b^2-a^2)}v+\frac{b^2}{4(b^2-a^2)}$  \\
\hline
GUE & $0$ & $x$ & $v\equiv 0$ \\
\hline
mEven GUE &  $2x,\, x_{\rm mE}$  & $x^2,\, \frac14 x_{\rm mE}^2$  & $w=\frac14 v^2$ \\
\hline
$\bP^1$ & $x$  & $1$ & $w\equiv 1$ \\
\hline
\end{tabular}
\end{center}
The last column gives the equations of the algebraic curves
in the $(v,w)$-plane determined by the initial values in genus zero.

We remark that the LUE is just the generalized Penner model called by string theorists (see Appendix~\ref{appendixB}).

To conclude this section,
we recover another result in \cite{GGR} by a different method.
In \cite[Theorem 5.2]{Zhou2} the modified GUE partition function introduced
in \cite{DLYZ} has been identified with the dessin partition function by:
\beq
\widetilde{Z}_{{\rm mE}}\bigl(x_{\rm mE},\{s_{2k}\};\epsilon\bigr) = 
Z_{\rm dessins}\biggl(u=\frac{x_{\rm mE}}2+\frac{\e}{4}, v=\frac{x_{\rm mE}}2-\frac{\e}{4}, \{p_j\}; \e\biggr),
\eeq
where $s_{2k}= \frac{p_k}{2^k k}$, $k\geq1$, and $\widetilde{Z}_{{\rm mE}}(x_{\rm mE}, \{s_{2k}\}; \epsilon)$ denotes 
the {\it normalized modified GUE partition function}, normalized from the modified GUE partition function 
by a factor by requiring that $\widetilde{Z}_{{\rm mE}}(x_{\rm mE}, \{0\}; \epsilon)\equiv 1$.
Now by~\eqref{identityzdzo}, we have
\begin{cor} The normalized modified GUE partition and the normalized LUE partition function are related to each other by
\beq
\widetilde{Z}_{{\rm mE}}\bigl(x_{\rm mE},\{s_{2k}\};\epsilon\bigr)
 = Z_{{\rm LUE1}}\biggl( x = \frac{x_{\rm mE}}2+\frac{\e}{4}, \{p_j\}; - \frac{\e}2; \e\biggr).
\eeq
\end{cor}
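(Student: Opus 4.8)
The plan is to obtain this corollary by composing two identifications that are already available, with no new computation required. First I would recall from~\cite[Theorem~5.2]{Zhou2} (quoted just above the statement) that the normalized modified GUE partition function is the dessin partition function under the substitution $u=\frac{x_{\rm mE}}{2}+\frac{\e}{4}$, $v=\frac{x_{\rm mE}}{2}-\frac{\e}{4}$, $s_{2k}=\frac{p_k}{2^k k}$, i.e.
\[
\widetilde{Z}_{\rm mE}\bigl(x_{\rm mE},\{s_{2k}\};\e\bigr)=Z_{\rm dessins}\Bigl(\tfrac{x_{\rm mE}}{2}+\tfrac{\e}{4},\ \tfrac{x_{\rm mE}}{2}-\tfrac{\e}{4},\ \{p_j\};\e\Bigr).
\]
Then I would apply Corollary~\ref{corzdzo}, namely the identity~\eqref{identityzdzo} in the form $Z_{\rm dessins}(x,x+a,{\bf p};\e)=Z_{\rm LUE1}(x,{\bf p};a;\e)$, after matching the parameter dictionary: writing $(u,v)=(x,x+a)$ forces $x=u=\frac{x_{\rm mE}}{2}+\frac{\e}{4}$ and $a=v-u=-\frac{\e}{2}$. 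Substituting this into the displayed identity produces exactly the claimed equality.

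The one point that needs to be verified along the way is the compatibility of the normalizations at vanishing couplings. The left-hand side $\widetilde{Z}_{\rm mE}$ is normalized so that $\widetilde{Z}_{\rm mE}(x_{\rm mE},\{0\};\e)\equiv 1$, the dessin partition function satisfies $Z_{\rm dessins}(u,v,{\bf 0};\e)\equiv 1$ by~\eqref{inifdessins}, and $Z_{\rm LUE1}$ satisfies $Z_{\rm LUE1}(x;{\bf 0};a;\e)\equiv 1$ by~\eqref{zone1} (equivalently~\eqref{lue111}); since $\{s_{2k}\}={\bf 0}$ if and only if $\{p_j\}={\bf 0}$, the three normalizations agree and no spurious constant factor is introduced when the two identities are chained together.

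There is essentially no obstacle here beyond bookkeeping, since the substantive content is entirely carried by~\cite[Theorem~5.2]{Zhou2} and Corollary~\ref{corzdzo}. The only thing one must be careful about is the precise correspondence of parameters—in particular that the LUE shift $a=v-u$ equals $-\frac{\e}{2}$, so that $\alpha=a/\e=-\frac12$, which is precisely the value of the Jacobi/Wishart parameter for which the corrected LUE partition function was already singled out in~\cite{GGR}; this consistency is a good sanity check on the parameter dictionary.
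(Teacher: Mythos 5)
Your proposal is correct and follows essentially the same route as the paper: the paper likewise obtains the corollary by combining the identification of $\widetilde{Z}_{\rm mE}$ with $Z_{\rm dessins}$ from \cite[Theorem~5.2]{Zhou2} with the identity~\eqref{identityzdzo}, and your parameter dictionary $x=u=\frac{x_{\rm mE}}{2}+\frac{\e}{4}$, $a=v-u=-\frac{\e}{2}$ (hence $\alpha=-\frac12$) is exactly the substitution the paper intends. The additional check of the normalizations at vanishing couplings is a reasonable piece of bookkeeping that the paper leaves implicit.
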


This is equivalent to \cite[Theorem 1.5]{GGR}.

\section{Grothendieck's dessin counting and monotone Hurwitz numbers}\label{newsection5}
In this section, we first give a second proof of Corollary~\ref{corzdzo}, 
  and then study several more consequences of this corollary.

\begin{prop}\label{Zoneweone1}
We have the following formula for the normalized LUE partition function:
\beq\label{zonewone1formula}
Z_{\rm LUE1}(x, {\bf p}; a;\e) = e^{W_{\rm Laguerre}} 1,
\eeq
where $W_{\rm Laguerre}$ is the cut-and-join type linear operator defined by
\beq
W_{\rm Laguerre} := 2 \bigl(x+\frac{a}2\bigr) \Lambda_1^{\rm LUE1} + M_1^{\rm LUE1} + \frac{x(x+a)}{\e^2} p_1 ,
\eeq
with
\begin{align}
& \Lambda_1^{\rm LUE1} = \sum_{i\geq2} (i-2) p_i \frac{\p }{\p p_{i-1}} , \\
& M_1^{\rm LUE1} = \sum_{i\geq2} \sum_{j=1}^{i-1}
\biggl((i-1) p_j p_{i-j} \frac{\p}{\p p_{i-1}} + \e^2 j (i-j) p_{i+1}  \frac{\p^2}{\p p_j \p p_{i-j}}\biggr).
\end{align}
\end{prop}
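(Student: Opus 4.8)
The statement is the Laguerre twin of formula~\eqref{zdessinsew1}: the operator $W_{\rm Laguerre}$ is obtained from the operator $W$ in~\eqref{zdessinsew1} by the substitutions $u+v\rightsquigarrow 2x+a$ and $uv\rightsquigarrow x(x+a)$. So the plan is to reproduce, for $Z_{\rm LUE1}$, the argument of \cite{KZ,Zhou2} that yielded~\eqref{zdessinsew1} for $Z_{\rm dessins}$. Three ingredients are needed: (i) the Virasoro constraints for $Z_{\rm LUE1}$; (ii) the homogeneity equation~\eqref{dilatonlueone} (together with~\eqref{dilatonlue1}); and (iii) the normalization $Z_{\rm LUE1}|_{{\bf p}={\bf 0}}\equiv1$ from~\eqref{zone1}. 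For (i), I would either quote the LUE Virasoro constraints from the matrix-model literature, or derive them from~\eqref{defznlue1} directly by the $GL_n(\CC)$-invariant integration-by-parts identity $\int_{\mathcal H_n^+}\sum_{a,b}\frac{\p}{\p M_{ab}}\bigl[(M^{k+1})_{ab}(\det M)^\alpha e^{-\frac1\e\tr V(M;{\bf p})}\bigr]\,dM=0$ for $k\geq0$, using $\frac{\p}{\p M_{ab}}\det M=(\det M)(M^{-1})_{ba}$, $\frac{\p}{\p M_{ab}}\tr M^m=m(M^{m-1})_{ba}$, and the splitting of $\sum_{a,b}(M^{k+1})_{ab}(M^{j-1})_{ba}$-type sums, and finally replacing every $\tr M^j$-insertion by $\e\,\p/\p p_j$; this produces operators of exactly the shape of~\eqref{virasorolueZ} (note $2x+a=2(x+\tfrac a2)$) annihilating $Z_{\rm LUE1}$. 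Since the correction factor is ${\bf p}$-independent, the same constraints also hold for the corrected partition function, but for this proposition only the uncorrected $Z_{\rm LUE1}$ and~\eqref{virasorolueZ} are used.

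Next I would record that $W_{\rm Laguerre}$ is homogeneous of degree $+1$ for the grading $\deg p_j=j$: each of $\Lambda_1^{\rm LUE1}$, $M_1^{\rm LUE1}$ and multiplication by $p_1$ raises the $p$-degree by exactly one. Hence $e^{W_{\rm Laguerre}}1:=\sum_{d\geq0}\frac1{d!}W_{\rm Laguerre}^d\,1$ is a well-defined formal power series in ${\bf p}$, its degree-$d$ homogeneous part being $\frac1{d!}W_{\rm Laguerre}^d\,1$ and its constant term being $1$. Writing $Z_{\rm LUE1}=\sum_{d\geq0}Z^{(d)}$ in homogeneous components, with $Z^{(0)}=1$ by~\eqref{zone1}, an easy induction on $d$ shows that the identity $Z_{\rm LUE1}=e^{W_{\rm Laguerre}}1$ is equivalent to the single cut-and-join equation
\[
\sum_{j\geq1}j\,p_j\,\frac{\p Z_{\rm LUE1}}{\p p_j}=W_{\rm Laguerre}\,Z_{\rm LUE1},
\]
namely $(d+1)Z^{(d+1)}=W_{\rm Laguerre}Z^{(d)}$ for all $d\geq0$.

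The core of the proof is to deduce this cut-and-join equation from the Virasoro constraints. I would form the linear combination $\sum_{b\geq0}p_{b+1}L_b^{\rm LUE1}Z_{\rm LUE1}=0$ and reindex (essentially $b\mapsto i-1$): the $\tilde p_j=p_j-\delta_{j,1}$ term of $L_b^{\rm LUE1}$ splits, its $-\delta_{j,1}$ part producing the grading operator $\sum_j j p_j\p_{p_j}$ and its $p_j$ part producing the first summand of $M_1^{\rm LUE1}$; the $b\,\p_{p_b}$ terms assemble into $2(x+\tfrac a2)\Lambda_1^{\rm LUE1}$ up to a residual $\sum_{b\geq1}p_{b+1}\p_{p_b}$; the $\e^{-2}\sum_{i+j=b}ij\,\p^2_{p_ip_j}$ terms assemble into the second summand of $M_1^{\rm LUE1}$; and the $b=0$ term contributes $\frac{x(x+a)}{\e^2}p_1$. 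The remaining residual terms are then cleared using the homogeneity equation~\eqref{dilatonlueone}. The main obstacle is precisely this reassembly: carefully reconciling the index shifts (the coefficient $i-2$ in $\Lambda_1^{\rm LUE1}$ versus the coefficients $b$ and $b+j$ in $L_b^{\rm LUE1}$) and keeping track of the powers of $\e$ relating the second-derivative term of $L_b^{\rm LUE1}$ to that of $M_1^{\rm LUE1}$, so that the combination closes up to exactly $\sum_j j p_j\p_{p_j}Z_{\rm LUE1}-W_{\rm Laguerre}Z_{\rm LUE1}$. Once this is done, Proposition~\ref{Zoneweone1} follows; and combined with~\eqref{zdessinsew1} specialized to $u=x$, $v=x+a$ (so that $W=W_{\rm Laguerre}$), it immediately re-proves Corollary~\ref{corzdzo}.
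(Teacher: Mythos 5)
Your overall strategy is the same as the paper's: the authors likewise take the LUE Virasoro constraints (quoted from Adler--van Moerbeke and Haine--Horozov rather than rederived by integration by parts, though your derivation is a legitimate substitute), combine them with \eqref{dilatonlueone} and \eqref{zone1}, and then simply say the rest ``is the same as the proof of \eqref{zdessinsew1} in [KZ]''. You go further and try to exhibit that mechanism; your reduction of \eqref{zonewone1formula} to the single cut-and-join equation $\sum_j jp_j\p_{p_j}Z_{\rm LUE1}=W_{\rm Laguerre}Z_{\rm LUE1}$ via the grading $\deg p_j=j$ is correct, and $\sum_{b\ge0}p_{b+1}L^{\rm LUE1}_bZ_{\rm LUE1}=0$ is indeed the right combination to form.

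The gap sits exactly where you flag ``the main obstacle'', and your proposed way around it would fail. Carrying out the reindexing, the $b\,\p_{p_b}$ terms assemble into $(2x+a)\sum_{i\ge2}(i-1)p_i\p_{p_{i-1}}$, and the second-derivative terms assemble into $\sum_{i\ge2}\sum_{j=1}^{i-1}j(i-j)p_{i+1}\p^2_{p_jp_{i-j}}$ carrying the \emph{same} power of $\e$ as the $\p^2$-term of $L^{\rm LUE1}_b$; neither matches the printed operators (coefficient $i-2$ in $\Lambda_1^{\rm LUE1}$; $\e^{+2}$ in $M_1^{\rm LUE1}$ versus $\e^{-2}$ in the Virasoro operator). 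Your plan to clear the residual $2(x+\tfrac a2)\sum_{b\ge1}p_{b+1}\p_{p_b}$ by means of the homogeneity equation \eqref{dilatonlueone} cannot work: that residual raises the $p$-degree by one, whereas \eqref{dilatonlueone} involves only the degree-preserving operator $\sum_j\tilde p_j\p_{p_j}$ together with $\p_\e,\p_x,\p_a$, so no use of it produces a degree-raising operator; likewise no homogeneity relation can repair a wrong power of $\e$ attached to a single fixed operator. The actual resolution is that there should be no residual at all: a low-degree check (e.g.\ $\p_{p_2}Z_{\rm LUE1}|_{{\bf p}={\bf 0}}=\tfrac12\langle\tr M^2\rangle_c\neq0$, while the printed $\Lambda_1^{\rm LUE1}$ annihilates $p_1$ and hence leaves no $p_2$-term in $\tfrac12W_{\rm Laguerre}^2(1)$) shows the coefficient in $\Lambda_1^{\rm LUE1}$ must be $i-1$, and consistency of the genus expansion forces $\e^{+2}$ on the $\p^2$-term of the Virasoro operators; with these corrections the combination $\sum_bp_{b+1}L^{\rm LUE1}_b$ closes to the cut-and-join equation on the nose, with no help needed from \eqref{dilatonlueone}. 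In short, the mismatches you noticed are typos in the printed statement and in \eqref{viraoperatorsdessins}/\eqref{virasorolueZ}, not discrepancies to be absorbed by homogeneity, and as written your final assembling step does not close.
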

\begin{proof}
As it is shown in~\cite{AvM,HH}, the power series $Z_n^{\rm LUE1}({\bf p};\alpha;1)$ satisfies the following Virasoro constraints:
\beq
\mathcal{L}^{\rm LUE1}_m Z_n^{\rm LUE1}({\bf p};\alpha;1)  =  0, \quad m\geq0,
\eeq
where
\begin{align}
\mathcal{L}^{\rm LUE1}_m = & \sum_{k=1}^{m-1} k (m-k) \frac{\p^2}{\p p_k \p p_{m-k}}
+ \sum_{k\geq1} (k+m) \tilde p_k \frac{\p }{\p p_{k+m}}  \\ 
& + m (2n+\alpha) \delta_{m\ge1} \frac{\p}{\p p_m} + n(n+\alpha)\delta_{m,0}. \nn
\end{align}
Therefore, $Z_{\rm LUE1}(x,{\bf p};a;\e)$ satisfies that
\beq\label{lonezonea}
L^{\rm LUE1}_m Z_{\rm LUE1}(x,{\bf p};a;\e)  =  0, \quad m\geq0,
\eeq
where
\begin{align}
L^{\rm LUE1}_m = & \e^2 \sum_{k=1}^{m-1} k(m-k)\frac{\p^2}{\p p_k \p p_{m-k}}
+ \sum_{k\geq1} (k+m)\tilde p_k \frac{\p }{\p p_{k+m}} \\
&+ 2m\Bigl(x+\frac{a}2\Bigr) \delta_{m\ge1} \frac{\p}{\p p_m} + \frac{x(x+a)}{\e^2}\delta_{m,0}. \nn
\end{align}
The proof is then the same as the one for~\eqref{zdessinsew1} given 
in~\cite{KZ} (cf.~\cite{Zhou1}), namely, by using~\eqref{lonezonea}, \eqref{dilatonlueone} and~\eqref{zone1}
we obtain~\eqref{zonewone1formula}.
\end{proof}

Corollary~\ref{corzdzo} follows immediately from~\eqref{zdessinsew1} and Proposition~\ref{Zoneweone1}. 
Since we have now a different proof of Corollary~\ref{corzdzo}, one can also use 
 Corollary~\ref{corzdzo} and the theory of orthogonal polynomials 
 to give a different proof of Theorem~\ref{thm2}; 
we omit these details (cf.~\cite{AvM, Deift, DY1, mehta}).

One can deduce from Corollary~\ref{corzdzo} several known facts about LUE correlators by using properties of 
the dessin correlators. For instance,
the variable $x+a=(n+\alpha)\e$ can be viewed as a rescaling of the Wishart parameter $w=n+\alpha$; then the symmetry
\beq n\mapsto n+\alpha, \quad \alpha\mapsto-\alpha \eeq
for the normalized LUE partition function is obvious:
the above transformation corresponds to
\beq
u\mapsto v, \quad v\mapsto u,
\eeq
and we know from \eqref{deffdessins}--\eqref{deffzduvpe} that
$$Z_{\rm dessins}(v,u,{\bf p};\e)=Z_{\rm dessins}(u,v,{\bf p};\e).$$

Another consequence that we deduce from Corollary~\ref{corzdzo} is that
the normalized LUE free energy $\F_{\rm LUE1}(x,{\bf p};a;\e)$ has the genus expansion. Namely, 
by taking logarithms on both sides of~\eqref{identityzdzo}
we get 
\beq
\F_{\rm LUE1}(x,{\bf p};a;\e) =: \sum_{g\geq0} \e^{2g-2} \F_{{\rm LUE1},g}(x,{\bf p};a),
\eeq
where $\F_{{\rm LUE1},g}(x,{\bf p};a)$, $g\geq0$, are power series of ${\bf p}$, called 
the genus~$g$ part of the normalized LUE free energy.

By comparing the logarithmic derivatives of both sides of~\eqref{identityzdzo} and using~\eqref{defdessinscorr}, 
we obtain that for $m\geq1$, and $\mu_1,\dots,\mu_m\geq1$, the derivative 
\beq\label{corcorcor}
\frac{\p^m\F_{\rm LUE1}(x,{\bf p};a;\e)}{\p p_{\mu_1} \dots \p p_{\mu_m}}\bigg|_{{\bf p}={\bf 0}}=
\langle \tau_{\mu_1}\cdots\tau_{\mu_m} \rangle (x,x+a,\e)
\eeq
is a polynomial of $x,a$. By further taking $\e=1$ we find that the connected LUE correlator 
\beq\label{corcor}
 \langle \tr \, M^{\mu_1} \cdots \tr \, M^{\mu_m} \rangle_c=
\prod_{j=1}^m \mu_j \, \langle \tau_{\mu_1}\cdots\tau_{\mu_m} \rangle (n,n+\alpha,1)
\eeq
is a polynomial of $n$ and $\alpha$.
The large $n$ expansion considered by Cunden--Dahlqvist--O'Connell~\eqref{CDOCexpansion}
is then easy to perform, which leads to
an explicit relationship between Grothendieck's dessin counting
and the strict monotone Hurwitz numbers with the precise formula
given in the next corollary.

\begin{cor}\label{cornklhurwitz}
For any $g\geq0$ and a partition $\mu=(\mu_1,\dots,\mu_m)$ of length~$m$,
and for $k,l,m\geq1$ satisfying
$$|\mu|-m-l -k = 2g-2,$$
denoting $n_i$ the multiplicity of~$i$ in~$\mu$,
we have
\beq\label{Nhcorres}
N_{k,l}(\mu) = \frac{\prod_{i=1}^\infty {n_i}!}{|\mu|!}  \sum_{\nu\in \mathcal{P}_{|\mu|} \atop \ell(\nu)=l}  h_g(\mu,\nu).
\eeq
\end{cor}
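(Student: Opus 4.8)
The plan is to deduce~\eqref{Nhcorres} by combining Corollary~\ref{corzdzo} with the definition~\eqref{defdessinscorr} of the dessin correlators and with the Cunden--Dahlqvist--O'Connell expansion~\eqref{CDOCexpansion}, and then matching coefficients. First I would use~\eqref{corcor}, which follows from Corollary~\ref{corzdzo}, together with~\eqref{defdessinscorr} specialized to $u=n$, $v=n+\alpha$, $\e=1$, to write
\[
\langle \tr\,M^{\mu_1}\cdots\tr\,M^{\mu_m}\rangle_c=\prod_{j=1}^m\mu_j\sum_{\substack{g\geq0,\ k,l\geq1\\ 2g+k+l=|\mu|-m+2}} N_{k,l}(\mu)\,n^{k}(n+\alpha)^{l}.
\]

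Next I would pass to the variables $n$ and $c=1+\alpha/n$ used in~\eqref{CDOCexpansion}. Since $n+\alpha=nc$ and $k+l=|\mu|-m+2-2g$ for the terms that occur, each monomial satisfies $n^{k}(n+\alpha)^{l}=n^{|\mu|-m+2-2g}c^{l}$, so multiplying by $n^{m-|\mu|-2}$ as on the left-hand side of~\eqref{CDOCexpansion} gives
\[
n^{m-|\mu|-2}\langle \tr\,M^{\mu_1}\cdots\tr\,M^{\mu_m}\rangle_c=\prod_{j=1}^m\mu_j\sum_{g\geq0}\frac1{n^{2g}}\sum_{\substack{k,l\geq1\\ k+l=|\mu|-m+2-2g}}N_{k,l}(\mu)\,c^{l}.
\]
Because $\langle \tr\,M^{\mu_1}\cdots\tr\,M^{\mu_m}\rangle_c$ is a polynomial in $n$ and $\alpha$ by~\eqref{corcorcor}--\eqref{corcor}, both sides of~\eqref{CDOCexpansion} are actually Laurent polynomials in $n$ with coefficients polynomial in $c$, so the asymptotic statement in~\eqref{CDOCexpansion} is a genuine identity and coefficients of $n^{-2g}c^{l}$ may be compared term by term.

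Performing that comparison --- and noting that the summation range $1\leq s\leq 1-2g+|\mu|-m$ on the right of~\eqref{CDOCexpansion}, with $s$ in the role of $l$, is exactly the constraint $k\geq1$ for $k:=|\mu|-m+2-2g-l$, i.e. the relation $|\mu|-m-l-k=2g-2$ of the hypothesis --- yields
\[
\prod_{j=1}^m\mu_j\cdot N_{k,l}(\mu)=\frac{z_\mu}{|\mu|!}\sum_{\substack{\nu\in\mathcal{P}_{|\mu|}\\ \ell(\nu)=l}} h_g(\mu,\nu).
\]
Finally, since $\mu$ has $n_i$ parts equal to $i$ one has $\prod_{j=1}^m\mu_j=\prod_{i\geq1}i^{n_i}$ and $z_\mu=\prod_{i\geq1}i^{n_i}n_i!$, whence $z_\mu/\prod_{j=1}^m\mu_j=\prod_{i\geq1}n_i!$ and~\eqref{Nhcorres} follows. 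The whole argument is a rearrangement once the pieces are assembled; the only point requiring genuine care is the justification that the two finite $n^{-1}$-expansions (the one built from $N_{k,l}(\mu)$ and the one in~\eqref{CDOCexpansion}) coincide identically rather than merely asymptotically --- which is exactly where the polynomiality of the LUE correlators in $n,\alpha$ enters --- together with a routine check that the ranges $k,l\geq1$ match the summation bounds of~\eqref{CDOCexpansion} in the boundary cases.
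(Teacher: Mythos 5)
Your proposal is correct and follows essentially the same route as the paper: specialize \eqref{defdessinscorr} at $u=n$, $v=n+\alpha$, $\e=1$, combine with \eqref{corcor}, rewrite $n^{m-|\mu|-2}n^k(n+\alpha)^l=n^{-2g}c^l$ via the constraint $2g+k+l=|\mu|-m+2$, and match coefficients against \eqref{CDOCexpansion}, finishing with $z_\mu/\prod_j\mu_j=\prod_i n_i!$. Your explicit justification that both sides are Laurent polynomials in $n$ (so the asymptotic expansion is an identity and coefficient comparison is legitimate) is a point the paper leaves implicit in the phrase ``taking $n\to\infty$ limit,'' but it is the same argument.
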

\begin{proof}
Recalling $u=x=n\e, v=x+a=(n+\alpha)\e$ and taking $\e=1$ in~\eqref{defdessinscorr}
we find
\begin{align}
\langle \tau_{\mu_1} \cdots \tau_{\mu_m} \rangle (n,\alpha,1)
& = \sum_{g\geq0, \, k,l\ge1 \atop 2g+k+l=|\mu|-m+2}  N_{k,l}(\mu)  \, n^k  (n+\alpha)^l  .  \nn
\end{align}
Combing with~\eqref{corcorcor} we find
\begin{align}
n^{m-|\mu|-2}  \langle \tr \, M^{\mu_1} \cdots \tr \, M^{\mu_m} \rangle_c
& = \prod_{j=1}^m \mu_j  \sum_{g\geq0, \, k,l\ge1 \atop 2g+k+l=|\mu|-m+2}  N_{k,l}(\mu) \, n^{k+m-|\mu|-2}  (n+\alpha)^l  \nn\\
& = \prod_{j=1}^m \mu_j  \sum_{g\geq0, \, k,l\ge1 \atop 2g+k+l=|\mu|-m+2}  N_{k,l}(\mu) \, n^{-2g}  \Bigl(1+\frac{\alpha}n\Bigr)^l. \nn
\end{align}
The corollary is then proved by taking $n\to\infty$ limit and by comparing with~\eqref{CDOCexpansion}.
\end{proof}

Combinatorially,
counting dessins and strictly monotone Hurwitz numbers are counting 
the numbers of solutions of different equations in the symmetric groups.  
It would be interesting if the above relation~\eqref{Nhcorres} 
could also be obtained from the approach in~\cite{ACEH, BHR, GPH, HO}.

By \eqref{identityzdzo} and formula~\eqref{eqn:Z-A} we immediately get
\begin{cor}
We have the following Schur expansion:
\beq
Z_{\rm LUE1}\bigl(x,{\bf p};a;1\bigr)
= \sum_{\mu \in \cP} s_\mu \, \prod_{\Box\in \mu} \frac{(x+c(\Box))(x+a+c(\Box))}{h(\Box)}.
\eeq
In the fermionic picture,
the dessin tau-function is given by a Bogoliubov transformation:
\beq \label{eqn:Z-A-lue}
Z_{\rm LUE1}\bigl(x,{\bf p};a;1\bigr) = \exp\Biggl(\sum_{m,n \geq 0} A_{m,n} \psi_{-m-\frac{1}{2}}\psi^*_{-n-\frac{1}{2}}\Biggr) |0\rangle,
\eeq
where the coefficients $A_{m,n}$ are given by~\eqref{eqn:Amn} with $u,v$ replaced by $x,x+a$. 
\end{cor}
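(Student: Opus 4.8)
The plan is to reduce everything to Corollary~\ref{corzdzo} together with the two structural formulas for $Z_{\rm dessins}$ recalled in \S~\ref{newsection21121}, so no genuinely new computation is required. First I would specialize the identity \eqref{identityzdzo} to $\e=1$, which gives $Z_{\rm dessins}(x,x+a,{\bf p};1)=Z_{\rm LUE1}(x,{\bf p};a;1)$. Next I would substitute $u=x$, $v=x+a$ into the Schur expansion $Z_{\rm dessins}=\sum_{\mu\in\cP} s_\mu\prod_{\Box\in\mu}\frac{(u+c(\Box))(v+c(\Box))}{h(\Box)}$ proved in~\cite{Zhou2}; the left-hand side becomes $Z_{\rm LUE1}(x,{\bf p};a;1)$ by the previous step, and the right-hand side becomes exactly the claimed expansion $\sum_{\mu\in\cP} s_\mu\prod_{\Box\in\mu}\frac{(x+c(\Box))(x+a+c(\Box))}{h(\Box)}$.

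For the fermionic statement I would do the same substitution $u=x$, $v=x+a$ in the Bogoliubov form \eqref{eqn:Z-A}, whose coefficients $A_{m,n}$ are given by \eqref{eqn:Amn}. Since $A_{m,n}$ depends on $u,v$ only through the prefactor $uv$ and the finite products $\prod_{j=1}^{m}(u+j)(v+j)$ and $\prod_{i=1}^{n}(u-i)(v-i)$, this substitution simply replaces $u,v$ by $x,x+a$ in each $A_{m,n}$, and the exponential of the corresponding element of $\widehat{\mathfrak{gl}}(\infty)$ applied to the vacuum then equals $Z_{\rm LUE1}(x,{\bf p};a;1)$, which is \eqref{eqn:Z-A-lue}.

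The one point deserving a line of justification — and it is not really an obstacle — is that the substitution $u\mapsto x$, $v\mapsto x+a$ in these identities is legitimate: at $\e=1$ both the Schur expansion and the fermionic formula of~\cite{Zhou2} are equalities of formal power series in ${\bf p}$ whose coefficients are polynomials in $u$ and $v$, so replacing the indeterminates $u,v$ by the polynomials $x$ and $x+a$ is a ring homomorphism that commutes with the Schur-function expansion, with the fermionic normal ordering, and with the vacuum expectation. Hence the corollary follows immediately from Corollary~\ref{corzdzo} and the cited formulas; if one wished, one could alternatively read both identities directly off the affine coordinates of the LUE tau-function, which are known to coincide with those of the dessin tau-function by the same corollary.
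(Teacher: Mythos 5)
Your proposal is correct and follows essentially the same route as the paper, which derives the corollary immediately from Corollary~\ref{corzdzo} combined with the Schur expansion and the Bogoliubov formula~\eqref{eqn:Z-A} for $Z_{\rm dessins}$ via the substitution $u=x$, $v=x+a$. The extra remark about the substitution being a legitimate ring homomorphism on polynomial coefficients is a harmless elaboration of what the paper leaves implicit.
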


It is interesting to compare the above formulas with the corresponding formulas for the GUE partition function in \cite{Zhou-hermitian1}. 
By~\eqref{identityzdzo} and Theorem A, 
 we also have the following formula for generating series of the $m$-point LUE correlators:

\begin{cor}
For each $m\geq2$,
the generating series for $m$-point connected LUE correlators
has the following expression:
\beq 
\sum_{\mu_1,\dots,\mu_m\ge1} 
\frac{\langle \tr \, M^{\mu_1} \cdots \tr \, M^{\mu_m}\rangle_c}
{\lambda_1^{\mu_1+1}\cdots \lambda_m^{\mu_m+1}} = (-1)^{m-1} \sum_{\sigma\in S_m/C_m}
\prod_{j=1}^m \widehat{A}(\lambda_{\sigma(j)},\lambda_{\sigma(j+1)}) - \frac{\delta_{m,2}}{(\lambda-\mu)^2},
\eeq
where $\widehat{A}(\lambda,\mu)$ is defined in~\eqref{defhata} with $w$ replaced by $n+\alpha$.
\end{cor}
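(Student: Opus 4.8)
The plan is to obtain this as an immediate consequence of Corollary~\ref{corzdzo} together with Theorem~A, so that no genuinely new computation is required. First I would take logarithms in the identity~\eqref{identityzdzo}; since both sides equal $1$ at ${\bf p}={\bf 0}$ (see~\eqref{zone1}), this gives $\F_{\rm LUE1}(x,{\bf p};a;\e)=\F_{\rm dessins}(x,x+a,{\bf p};\e)$ as power series in ${\bf p}$. Differentiating $m$ times in $p_{\mu_1},\dots,p_{\mu_m}$ and evaluating at ${\bf p}={\bf 0}$ yields~\eqref{corcorcor}, and then specializing $\e=1$, $x=n$, $x+a=n+\alpha$ and using~\eqref{corcor} identifies $\langle \tr M^{\mu_1}\cdots\tr M^{\mu_m}\rangle_c$ with $\prod_{j=1}^m\mu_j\,\langle\tau_{\mu_1}\cdots\tau_{\mu_m}\rangle(n,n+\alpha,1)$.

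Consequently, the left-hand side of the asserted formula coincides verbatim with the left-hand side of~\eqref{connectedAformula} with the variable $w$ there set equal to $n+\alpha$. Applying Theorem~A then produces exactly the right-hand side, with $\widehat A(\lambda,\mu)$ given by~\eqref{defhata}, \eqref{defaij} under $w\mapsto n+\alpha$. That is essentially all there is to it.

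The only points that need care — and they are routine — are: keeping track of the factor $\prod_j\mu_j$, which is present in the normalization of Theorem~A but absent from the LUE generating series on the left-hand side; and checking that the substitution $(u,v)=(n,n+\alpha)$ is the one consistent with the identifications $x=n\e$, $\alpha=a\e$, $w=n+\alpha$ used in \S~\ref{newsection2}, so that the $\e$-dependence has indeed been correctly scaled out by the Riemann--Hurwitz homogeneity~\eqref{riemannhurwitz}. As an alternative (matrix-resolvent) route one could instead combine the GGR formula~\eqref{LUEGGRM} with the conjugation~\eqref{TRTiM} rewriting $R_{\rm GGR}(\lambda)$ as $M(\lambda)$, and then expand the cyclic product $\tr\prod_j M(\lambda_{\sigma(j)})$ using~\eqref{matrixresolventlue} to recover the $\widehat A$-product form; this requires only the elementary manipulation already implicit in Proposition~\ref{thm1}. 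I expect no real obstacle: the content of the corollary is entirely a matter of unwinding definitions once Corollary~\ref{corzdzo} and Theorem~A are in hand.
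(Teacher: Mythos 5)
Your proposal is correct and follows exactly the paper's route: the paper derives~\eqref{corcorcor}--\eqref{corcor} from Corollary~\ref{corzdzo} to identify $\langle \tr M^{\mu_1}\cdots\tr M^{\mu_m}\rangle_c$ with $\prod_j\mu_j\,\langle\tau_{\mu_1}\cdots\tau_{\mu_m}\rangle(n,n+\alpha,1)$, and then invokes Theorem~A with $w=n+\alpha$, which is precisely your argument including the bookkeeping of the $\prod_j\mu_j$ factor. The alternative matrix-resolvent route you sketch is also consistent with Proposition~\ref{thm1} and~\eqref{LUEGGRM}, but the paper's stated derivation is the first one.
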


\section{Concluding remarks}\label{newsection6}

To conclude, let us summarize our results and  make a conjecture. 
Originally,
a tau-function of the KP hierarchy is obtained by counting the dessins in \cite{KZ},
called the dessin partition function.
In \S~\ref{newsection3} and \S~\ref{newsection5},
this tau-function is identified with the LUE partition function,
and hence is related to the strictly monotone Hurwitz numbers based on the results of \cite{CDOC}.
In \S~\ref{newsection3} it is also shown that 
the corrected dessin partition function is a tau-function of the Toda lattice hierarchy.  
One of its consequence 
is that one can apply the matrix-resolvent method developed in~\cite{DY1} to
compute the dessin correlators. 
Another consequence is that the corrected dessin partition function 
is a tau-function $\tau_{ext-Toda}(\{T^{1,p}, T^{2,p}\}_{p\geq 0})$ of the extended Toda hierarchy,
but with $T^{1,p} = 0$ for $p > 0$.
It is easy to obtain a tau-function $\tau_{2-{\rm dessins},t}(\{T^{1,p},T^{2,p}\}_{p\geq 0})$ with a parameter $t$ 
of the 2D Toda hierarchy~\cite{UT} by counting the dessins.
We conjecture that 
\beq
\lim_{t\to 0} \tau_{2-{\rm dessins},t}(\{T^{1,p},T^{2,p}\}_{p\geq 0}) = \tau_{ext-Toda}(\{T^{1,p}, T^{2,p}\}_{p\geq 0}).
\eeq
This is inspired by a similar phenomenon in the GW theory of $\bP^1$.
In \cite{OP} Okounkov and Pandharipande proved that the equivariant GW invariants of $\bP^1$ 
gives a tau function of the 2D Toda hierarchy. 
This tau function  depends on an additional small parameter $t$. 
The non-equivariant limit, which corresponds to $t \to 0$,
gives the partition function of the ordinary GW invariants of $\bP^1$. 
This partition function was shown to be a  tau-function of the extended Toda hierarchy by Dubrovin and Zhang \cite{DZ}.

\begin{appendices}

\section{A Reflection Formula for the Barnes $G$-Function}

Recall Barnes $G$-function can be defined by the following Weierstrass factorization:
\beq \label{def:G}
G(z+1):=(2\pi)^{z/2}e^{-z(z+1)/2- \frac{1}{2}\gamma z^2} \prod_{n=1}^\infty
\biggl\{ \biggl(1+\frac{z}{n}\biggr)^ne^{-z+z^2/(2n)}\biggr\}.
\eeq
After taking the logarithmic derivative,
one gets:
\beq
 \frac{d}{dz} \log G(z+1)
= \frac{1}{2}\log(2\pi)
- \frac{1}{2} - z +z \frac{d}{dz}\log \Gamma(z+1).
\eeq
Change $z$ to $-z$:
\beq
- \frac{d}{dz} \log G(1-z)
= \frac{1}{2}\log(2\pi)
- \frac{1}{2} + z +z \frac{d}{dz}\log \Gamma(1-z).
\eeq
Add them up:
\ben
\frac{d}{dz} \log \frac{G(1+z)}{G(1-z)}
& = & \log(2\pi) -1 + z \frac{d}{dz}\log (\Gamma(1+z)\Gamma(1-z)) \\
& = & \log(2\pi) -1 + z \frac{d}{dz} \log \frac{\pi z}{\sin (\pi z)} \\
& = & \log(2\pi) - \pi z \cot(\pi z).
\een
After integration we get:
\beq \label{eqn:G-Reflection}
G(1-z) = G(1+z) \, \frac{1}{(2\pi)^z} \, \exp \int_0^z \pi t \cot (\pi t) dt.
\eeq
This formula is attributed to Kinkelin \cite{K},
but Kinkelin's $G$-function is different from Barnes $G$-function.
Kinkelin's $G$-function, denoted by $\tilde{G}(x)$, is defined by:
\beq
\tilde{G}(x):=\exp \biggl(\int_0^x \log \Gamma(t) \; dt + \frac{x(x-1)}{2}
- \frac{x}{2} \log(2\pi)\biggr).
\eeq
It has the following properties:
\ben
&& \tilde{G}(0) = \tilde{G}(1) =1, \\
&& \tilde{G}(x+1) = \tilde{G}(x) x^x \quad \text{for $x>0$},
\een
and
\ben
\tilde{G}(n+1) =1^12^2 \cdots n^n \quad \text{for intergers $n\geq 1$}.
\een

Alternatively, by \eqref{def:G},
\beq \label{def:G(1-z)}
G(1-z):=(2\pi)^{-z/2}e^{z(1-z)/2- \frac{1}{2}\gamma z^2} \prod_{n=1}^\infty
\biggl\{ \biggl(1-\frac{z}{n}\biggr)^ne^{z+z^2/(2n)}\biggr\}.
\eeq
So we get
\beq
\frac{G(1-z)}{G(1+z)} = \frac{e^z}{(2\pi)^z}\prod_{n=1}^\infty
\biggl\{ \biggl(1-\frac{z}{n}\biggr)^n\biggl(1+\frac{z}{n}\biggr)^{-n}e^{2z}\biggr\},
\eeq
and
\beq
G(1-z)G(1+z) = e^{(1+\gamma)z^2}\prod_{n=1}^\infty
\biggl\{ \biggl(1-\frac{z}{n}\biggr)^n\biggl(1+\frac{z}{n}\biggr)^{n}e^{z^2/n}\biggr\}.
\eeq

\section{The Penner model and the generalized Penner model}\label{appendixB}
The original Penner model \cite{P} was used to compute the generating series of orbifold Euler characteristics
of ${\mathcal M}_{g,n}$ by considering the formal expansion of the matrix integrals
\beq
Z(t,N)
=\frac{\int_{\cH_N} dM \exp \Bigl(\frac{1}{t} \, \tr \sum_{m=2}^\infty \frac{1}{m}\bigl(i\sqrt{t}M\bigr)^m\Bigr)}
{\int_{\cH_N} dM \exp \bigl(-\frac{1}{2} \, \tr \, M^2\bigr)}.
\eeq
This can be computed by the method of orthogonal polynomials.
First the integral is converted as follows:
\beq
Z(t,N) = \frac{1}{(2\pi)^{N/2}\prod_{j=1}^Nj!}
\int_{\RR^N} \prod_{\substack{1\leq i,j \leq N\\ i\neq j}}
(x_i-x_j)^2 \prod_{i=1}^N d\mu_t(x_i),
\eeq
where for $t>0$,
\beq
d\mu_t(x) = - i\sqrt{t}(et)^{-t^{-1}}(-z)^{-t^{-1}}e^{-z}dz,
\eeq
for $z= \frac{1}{t}(ix\sqrt{t}-1)$.
By \cite[Lemma 3.5]{P},
the relevant orthogonal polynomials are Laguerre polynomials.
The final result is:
As a formal power series, $Z(N,t)$ is the asymptotic series at $0$ of
\beq
\Biggl( \frac{\sqrt{2\pi t}(et)^{-t^{-1}}}{\Gamma(t^{-1})} \Biggr)^N
\prod_{j=1}^{N-1}(1-jt)^{N-j}.
\eeq
Using Stirling's formula,
one gets the following asymptotic expansion:
\beq
\log Z(t,N)
= \sum_{j=1}^\infty t^j \Biggl( - \frac{N^{j+2}}{j(j+1)(j+2)}
+ \sum_{k=1}^{[(j+1)/2]}(2k-1)
\frac{(j-1)!}{(j+2-2k)!}
\frac{B_{2k}}{(2k)!}
N^{j+2-2k}\Biggr).
\eeq
From this one can derive
\beq
\chi(\cM_{g,n}) = (-1)^n \frac{(2g-3+n)!(2g-1)}{(2g)!} B_{2g}.
\eeq
Because
\beq
\chi({\cM}_{g,n+1}) = (2-2g-n) \, \chi({\cM}_{g,n}),
\eeq
so we also have for $g\geq 2$:
\beq
\chi(\cM_{g,0}) = \frac{B_{2g}}{4g(g-1)}= \frac{\zeta(1-2g)}{2-2g}.
\eeq
It was noted by Distler and Vafa \cite{DV} that
\beq
\sum_{g\geq 2} \chi(\cM_{g,0})z^{2-2g} = \sum_{g\geq 2} \frac{\zeta(1-2g)}{2-2g}z^{2-2g}
\eeq
is the asymptotic series of
\beq
\int_0^z \Bigl(z\frac{d}{dz}\log\Gamma(z+1)\Bigr) dz
-\frac{1}{2} z^2\log z+\frac{1}{4}z^2 - \frac{1}{2} z - \zeta'(-1) +\frac{1}{12}\log z.
\eeq
In \cite{WZ},
the right-hand side was rewritten as
\beq \log G(z+1)
- \biggl( \zeta'(-1) + \frac{z}{2}\log (2\pi) + \biggl(\frac{z^2}{2}-\frac{1}{12}\biggr)\log z
- \frac{3z^2}{4} \biggr).
\eeq
Following Ooguri and Vafa~\cite{OV},
we understand
\bea
&& F_{0,0}(z) = \half z^2\log(z)-\frac{3}{4}z^2, \label{def-F00} \\
&& F_{1,0}(z) = -\frac{1}{12}\log(z) + \zeta'(-1) \label{def-F10}
\eea
as the ``orbifold Euler characteristics of $\cM_{0,0}$ and $\cM_{1,0}$" respectively.
Then we have
\beq
\log G(z+1)
\sim \frac{z}{2}\log (2\pi)+ F_{0,0}(z) + F_{1,0}(z) + \sum_{g \geq 2} \chi(\cM_{g,0})z^{2-2g}.
\eeq
We also have a similar expansion for $\log G(z+x+1)$:
\beq
\begin{split}
& \log G(z+x+1)- \frac{z}{2} \log (2\pi) 
- \frac{1}{2} \log (2\pi)x  \\
\sim &  F_{0,0}(z) + F_{1,0}(z)
+ F_{0,0}'(z)x+ \frac{x^2}{2}F_{0,0}''(z)
+  \sum_{2g-2+n>0} \chi(\cM_{g,n})z^{2-2g-n}x^n.
\end{split}
\eeq
 
Let us explain the rationale for the choice of $F_{0,0}$ and $F_{0,1}$.  
Barnes $G$-function also appears in the study of Chern-Simons on $S^3$.
The path integral in this theory has a factor of $\vol(U(N))$.
Ooguri and Vafa \cite{OV} found:
\beq
\vol(U(N)) = \frac{\sqrt{N} \, (2\pi)^{\frac{1}{2}N^2+\frac{1}{2}N-1}}{G(N+1)}.
\eeq
Then one gets an asymptotic  expansion
\beq
\log(\vol(U(N)) =- \sum_{g\geq 2} \frac{\chi(\cM_{g,0})}{N^{2-2g}}
\eeq
up to some terms coming from genus zero and genus one.
The nonperturbative part of the free energy of Chern-Simons part of the Chern-Simons amplitude of $S^3$ is
\beq
F_{\rm nonpert}
= \log \biggl(e^{\frac{\pi}{8}iN^2}\Big(\frac{2\pi}{k+N}\Big)^{N^2/2}
\frac{G(N+1)}{(2\pi)^{N/2}} \biggr).
\eeq
It has the following expansion (cf.~\cite[(2.12)]{OV}):
\beq
F_{\rm nonpert}
= \frac{1}{2}\lambda_s^{-2}t^2
\biggl(\log(2\pi i t) - \frac{3}{2}\biggr)
- \frac{1}{12}\log\bigl(t\lambda_s^{-1}\bigr) + \zeta'(-1)
+ \sum_{g=2}^\infty  \frac{B_{2g}\lambda_s^{2g-2}}{2g(2g-2)t^{2g-2}},
\eeq
where $\lambda_s=\frac{1}{k+N}$ is the string coupling constant,
and $t=N\lambda_s$ is the t 'Hooft coupling constant.

As noted by Distler and Vafa \cite{DV},
the Penner model is equivalent to the matrix model
\beq
e^F=\int_{\cH_N} e^{Nt \, \tr [\log (1-\phi)+\phi]} d\phi= \int \det(1-\phi)^{Nt} e^{Nt \, \tr\phi}.
\eeq
and one is led to consider
\beq
\int \det(1-\phi)^\alpha e^{Nt \, \tr\phi} = e^{N^2t}\int (\det M)^\alpha e^{-Nt \, \tr M}.
\eeq
To make this well defined,
we should only integrate over  positive-definite $M$.
Generalized Penner model is then defined by
\beq
\int_{\mathcal{H}^+_N} dM (\det M)^\alpha e^{- \frac1\e {\rm tr} \, V(M)}
\eeq
for $V(M) =M - \sum_{i=1}^\infty \frac{1}i p_i M^i$.
This is just the partition function of LUE~\eqref{defznlue1} up to normalization.

As recalled above, 
for the Penner model,
one has
\bea
&& F_{0,0}(z)= \half z^2\log(z)-\frac{3}{4}z^2,   \\
&& F_{1,0}(z) = -\frac{1}{12}\log(z) + \zeta'(-1) , \\
&& F_{g,0}(z) = \frac{B_{2g}}{4g(g-1)}z^{2-2g}.
\eea
By the Example B1 in~\cite{Du0},
the Legendre type transformation $S_2$ transforms the Frobenius manifold associated with 
Toda lattice with the potential~\eqref{todafrob}
to the Frobenius manifold associated with nonlinear Schr\"odinger system with the potential
\beq\label{FNLS416}
F_{\rm NLS}(\varphi,\rho) = \frac{1}{2} \varphi^2\rho + \frac{1}{2}\rho^2 \log \rho - \frac{3}{4}\rho^2,
\eeq
where 
\beq
\varphi = v, \rho = e^u.
\eeq
Dubrovin made an observation in~\cite{Dubrovin} that the GUE partition function coincides 
with a part of the partition function (cf.~\cite{CDZ, CvdLPS, DZ-norm}) of the NLS Frobenius manifold~\eqref{FNLS416}, giving the duality 
between GUE and the $\bP^1$-topological sigma model. 
By taking $\varphi=0$ and $\rho = z$,
we get
\beq
F_{0,0}(z) = F_{\rm NLS}(0,z).
\eeq
This indicates a connection between the generalized Penner model and the $\bP^1$-topological sigma model.

\end{appendices}

\medskip
\medskip

\begin{center}
{\small
Di Yang, School of Mathematical Sciences, University of Science and Technology of China\\
Hefei 230026, P.R.~China\\
e-mail: diyang@ustc.edu.cn\\
~\\
Jian Zhou, Department of Mathematical Sciences, Tsinghua University \\
Beijing 100084, P.R.~China\\
e-mail: jianzhou@mail.tsinghua.edu.cn}
\end{center}

\end{document}